\documentclass[9pt,twocolumn]{extarticle}

\pdfoutput=1

\usepackage{amsmath, amsthm, amssymb}
\usepackage[ruled]{algorithm2e} 
\usepackage[backend=bibtex,sorting=none, style=science]{biblatex}
\usepackage{complexity}
\usepackage[usenames, dvipsnames]{color}
\usepackage[margin=1in]{geometry}
\usepackage{graphicx}
\usepackage{imakeidx} 
\usepackage{pifont}
\usepackage{siunitx}
\usepackage{soul}
\usepackage{titling}

\DeclareNameAlias{sortname}{first-last}
\DeclareNameAlias{default}{first-last}
\setlength\bibitemsep{0.1\itemsep}

\newenvironment{proofof}[1]{\begin{trivlist}\item[]{\flushleft\it
Proof of~#1.}}
{\qed\end{trivlist}}

\bibliography{biblio} 

\renewcommand{\figurename}{Fig.}

\newtheorem{theorem}{Theorem}

\newcommand{\pseudosection}[1]{\vspace{9pt}\noindent\textbf{#1}}

\widowpenalty=1000
\clubpenalty=1000


\newcommand{\ket}[1]{\ensuremath{|#1\rangle\mkern-1mu}}
\newcommand{\bra}[1]{\ensuremath{\mkern-1mu\langle#1|}}

\setlength{\columnsep}{6mm}
\addtolength{\textheight}{40pt}
\addtolength{\topmargin}{-20pt}
\setlength{\parskip}{0pt}
\setlength\parindent{12pt}
\addtolength{\textwidth}{42pt}
\setlength{\oddsidemargin}{-21pt}
\setlength{\evensidemargin}{-21pt}
\setlength{\skip\footins}{0.8cm}
\setlength{\footnotesep}{0.5cm}

\widowpenalty=1000
\clubpenalty=1000


\def\({\left(}
\def\){\right)}
\def\[{\left[}
\def\]{\right]}

\def\mytitle{Witnessing eigenstates for quantum simulation of Hamiltonian spectra}

\title{\vspace{-1.0cm}\LARGE\textbf{\textrm{\color{black} \mytitle}}}

\author{R.~Santagati$^{1\dagger*}$, J.~Wang$^{1\dagger}$, A.~A.~Gentile$^{1\dagger}$, S.~Paesani$^{1}$, N.~Wiebe$^{2*}$, J.~R.~McClean$^{3, 4}$,  S.~Morley-Short$^{1, 5}$,
\\P.~J.~Shadbolt$^{6}$, D.~Bonneau$^{1}$, J.~W.~Silverstone$^{1}$, 
D.~P.~Tew$^{7}$, X.~Zhou$^{8}$, J.~L.~O\textquoteright Brien$^{1}$ and
M.~G.~Thompson$^{1*}$}
\date{}
\begin{document}
\twocolumn[{
\maketitle
\vspace{-7mm}
\begin{center}
\begin{minipage}{0.8\textwidth}
\begin{center} 
\textit{\textrm{\small\textsuperscript{1}
Quantum Engineering Technology Labs, H. H. Wills Physics Laboratory and Department of Electrical and Electronic Engineering, University of Bristol, BS8 1FD, UK
\\\textsuperscript{2} Quantum Architectures and Computation Group, Microsoft Research, Redmond, Washington 98052, USA
\\\textsuperscript{3} Computational Research Division, Lawrence Berkeley National Laboratory, Berkeley, CA 94720, USA
\\\textsuperscript{4} Google Inc., Venice, California 90291, USA
\\\textsuperscript{5} Quantum Engineering Centre for Doctoral Training, H. H. Wills Physics Laboratory and Department of Electrical and Electronic Engineering, University of Bristol, Tyndall Avenue, BS8 1FD, United Kingdom
\\\textsuperscript{6} Department of Physics, Imperial College London, London, SW7 2AZ, UK
\\\textsuperscript{7} School of Chemistry, University of Bristol, Bristol BS8 1TS, United Kingdom
\\\textsuperscript{8} State Key Laboratory of Optoelectronic Materials and Technologies and School of Physics, Sun Yat-sen University, Guangzhou 510275, China
\\\textsuperscript{$\dagger$} These authors contributed equally to this work.
\\\textsuperscript{*} e-mails: raffaele.santagati@bristol.ac.uk, nawiebe@microsoft.com, mark.thompson@bristol.ac.uk }} 

\end{center}
\end{minipage}
\vspace{+2mm}
\end{center}

\setlength\parindent{12pt}

\begin{quotation}
\noindent
\textbf{Abstract:}
The efficient calculation of Hamiltonian spectra, a problem often intractable on classical machines, can find application in many fields, from physics to chemistry. 
Here, we introduce the concept of an  "eigenstate witness" and through it provide a new quantum approach which combines variational methods and phase estimation to approximate eigenvalues for both ground and excited states.
This protocol is experimentally verified on a programmable silicon quantum photonic chip, a mass-manufacturable platform, which embeds entangled state generation, arbitrary
controlled-unitary operations, and projective measurements. Both ground and excited states are experimentally found with fidelities $>99\%$, and their eigenvalues are estimated with 32-bits of precision. 
We also investigate and discuss the scalability of the approach and study its performance through numerical simulations of more complex Hamiltonians. 
This result shows promising progress towards quantum chemistry on quantum computers.
\vspace{+6mm}
\end{quotation}

}]		


\section*{Introduction}
The simulation of quantum mechanical systems, using conventional classical methods, requires resources that make the problems rapidly intractable when the size of the system grows~\cite{Feynman1982}. 
Since Feynman's seminal proposal, several algorithms for quantum simulation have followed~\cite{Lloyd1996, AspuruGuzik2005,Georgescu:2014um}, and many demonstrations have been reported in different physical systems \cite{AspuruGuzik2012, Bloch2012,Blatt2012,  Lanyon2011,Brown2006, Neeley2009,Du2010, OMalley2016,Lanyon2010, Peruzzo2014, Ma2011, pitsios2016,Kandala2017}. 
Calculating the spectrum of a given Hamiltonian is a fundamental problem of widespread applicability, necessary, for example, to understand reaction rates or optical spectra in quantum chemistry. In particular, characterisation of excited states is required to study 
energy and charge transfer processes such as those in bulk heterojunction solar cells or photosynthetic light harvesting complexes~\cite{Olsson2006,Lidar1999}. 
These kinds of problems are hard for classical computers
\textcolor{black}{and in the most general case also for quantum computers. However, quantum devices are expected to provide scalable solutions to some instances of interest~\cite{Lloyd1996, AspuruGuzik2005}. Furthermore, also in those cases where classical methods can successfully describe the ground state (e.g. for weakly interacting Hamiltonians), excited states are often hard to access~\cite{serrano2005quantum,Crawford_2000}, 
increasing the interest towards quantum methods able to address the problem of finding an efficient description of excited states. Here we show promising progress in this direction by introducing the concept of an  ``eigenstate witness'', a quantity which has no known efficient analogue in classical algorithms. This witness is based on the entropy acquired by a quantum register, whose time evolution is controlled by an ancillary qubit.} 

Given an approximate eigenstate, the quantum phase estimation algorithm (QPEA) can efficiently estimate the corresponding eigenvalue. A more practical version, the iterative phase estimation algorithm (IPEA)~\cite{Kitaev1995,Dobsicek2007}, has been demonstrated using different quantum hardware, such as nuclear magnetic resonance, photonic and superconducting systems~\cite{Du2010, Lanyon2010, OMalley2016}. 
In order to prepare the input eigenstates, adiabatic state preparation has been proposed as a potentially scalable solution~\cite{AspuruGuzik2005}, at the cost of expensive state preparation and high-depth circuits, making it unsuitable for near-term implementations on quantum computers.

Variational quantum eigensolvers (VQE), using a hybrid quantum-classical approach, were designed to address these shortcomings~\cite{ Peruzzo2014, Jarrod2016, OMalley2016, Romero:2017vp,  Guerreschi:2017vm,Li:2017gm}. \textcolor{black}{Such methods prepare states described via a chosen 
parametrisation, known as \textit{ansatz}, leveraging pre-existing knowledge about the system. Different types of ansatz have been proposed for the variational search, such as unitary coupled-cluster (UCC), which is among the most promising ones to tackle quantum chemistry problems~\cite{Peruzzo2014,Jarrod2016,Wecker2015}.}
In addition, VQE methods are believed to have unique robustness to certain errors in estimating the ground state and its eigenvalue~\cite{Jarrod2016b,OMalley2016}.  
They are, however, quadratically less precise than QPEA, as they rely on sampling for the energy estimation in the original formulation.
Crucially, variational methods so far could only
target ground states. 

A linear response methodology and a spectrum folding method have been proposed as possible solutions~\cite{Jarrod2016b, Peruzzo2014, Colless:2017wb}. However, though the linear response methodology maintains the low coherence time advantages of the original VQE, it requires additional sampling measurements and cannot refine approximate excited states. The folded spectrum method requires instead a quadratic increase in the number of terms of the effective Hamiltonian and consequently in the computational cost of the procedure.  
\textcolor{black}{Thus, experimental demonstrations have been limited to ground states, despite the practical importance of excited states. }

In this work, \textcolor{black}{by introducing the concept of an ``\textit{eigenstate witness}'' we develop a new method that targets also excited states.}
 A crucial limitation for the solution of the eigenvalue problem is that no  method for eigenstates preparation is expected to be scalable in general~\cite{cipra2000ising}. It remains unanswered, whether variational methods can solve particular classes of this problem. However, it is widely conjectured that eigenstates of physically relevant Hamiltonians often can be efficiently represented within an ansatz~\cite{AspuruGuzik2005, Peruzzo2014, Georgescu:2014um, Jarrod2016}. 
In such cases, we estimate that the number of application of controlled operations required to perform our algorithm increases polynomially with the size of the system.

We demonstrate this method in a proof of principle experiment and test its performance via numerical simulations on higher dimensional Hamiltonians. 
For the implementation of the algorithm, we developed a two-qubit quantum photonic processor on the silicon photonic platform~\cite{Bonneau2016,Sharping2006}. This device embeds the key functionalities of on-chip 
entangled states generation ~\cite{Santagati:2017uf, Silverstone2015,Sansoni2010}, tomography~\cite{Santagati:2017uf, Wang2015} and arbitrary controlled-unitary operations (${C\hat{U}}$). 
To perform the latter, an entanglement-based scheme~\cite{Zhou2011,Patel2016} was implemented here for the first time in integrated quantum photonics.

\section*{Results}
\textbf{The WAVES protocol}.  The approach proposed here,  witness-assisted variational eigenspectra solver (WAVES), is divided into three main steps (see Fig.~\ref{figure1}a): 
\textit{i}) an ansatz-based variational search for the ground state,
\textit{ii}) a witness-assisted variational search for excited states, starting with an initial guess obtained from the ground state reference as outlined below, and
\textit{iii}) IPEA for the accurate energy estimate of the eigenstates found.

\begin{figure*}[ht!] \begin{center}
\makebox[\textwidth][c]{\includegraphics[width=1\textwidth]{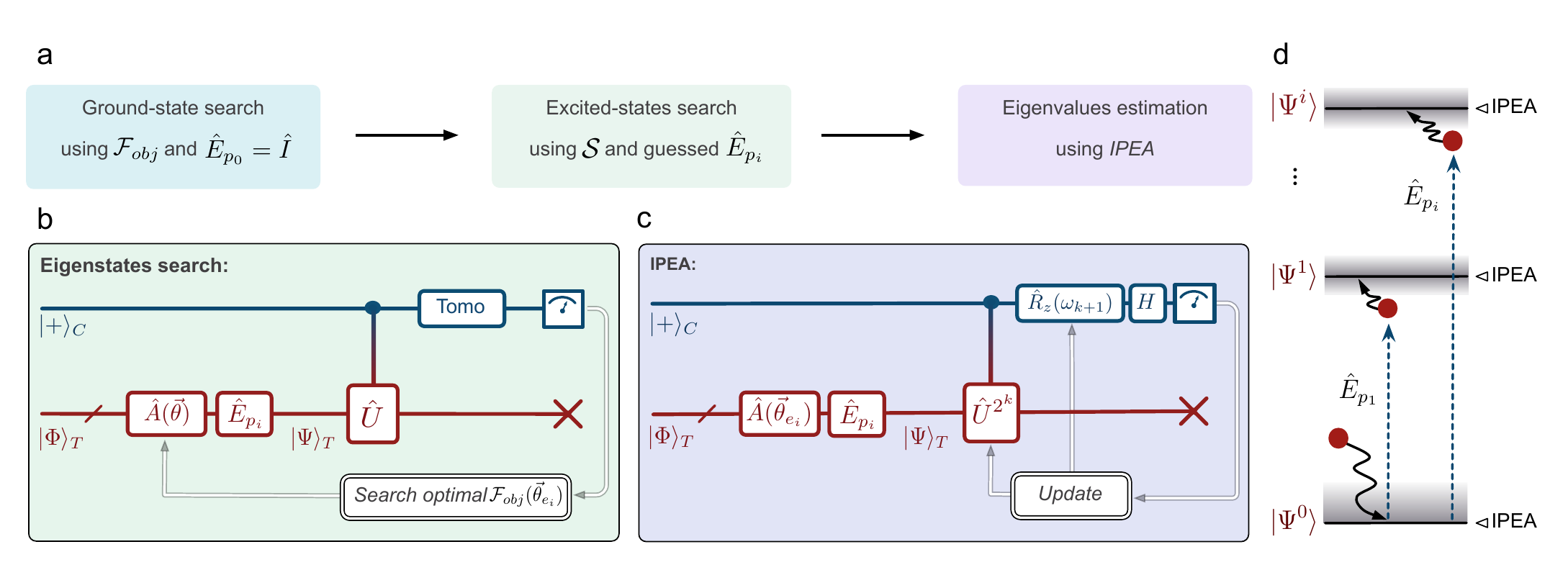}}
 \caption{\textbf{The WAVES protocol. }
(\textbf{a}) 
Flowchart describing the protocol. 
The optimisation of $\mathcal{F}_{\textnormal {obj}}(\vec{\theta})=\mathcal{E}+T\mathcal{S}$ using the circuit in (\textbf{b}) allows one to variationally find the ground state of the Hamiltonian, preparing trial states via the ansatz $\hat{A}(\vec{\theta})$ with no perturbation ($\hat{E}_{p_0} = \hat{I}$). An initial guess for an  excited state is given by a perturbation $\hat{E}_{p_i}$ on the ground state, and then refined using the same circuit, by exploiting the eigenstate witness $\mathcal{F}_{\textnormal {obj}}(\vec{\theta})=\mathcal{S}$ (high-$T$ limit).
(\textbf{c}) For each target eigenstate found, the eigenvalues are precisely estimated via the iterative phase estimation algorithm (IPEA), using the quantum logic circuit, where $H$ is the Hadamard gate. The colour coding in  \textbf{b} and \textbf{c}, blue for the control and red for the target refers to the difference in wavelength between the photon in the control qubit and the one in the target register, in our experimental implementation. 
(\textbf{d})
Diagram representing schematically the intuition behind the proposed approach, where initial guesses of excited states are variationally refined using the witness and IPEA returns the eigenvalues.
  }\label{figure1}
 \end{center}\end{figure*}
 
The quantum logic circuits for WAVES are shown in Fig.~\ref{figure1}b and Fig.~\ref{figure1}c.
The search (Fig.~\ref{figure1}b) proceeds by preparing trial states $|\Psi \rangle_{T}$ in the target register, according to the ansatz, 
and setting the control qubit to $|+\rangle_C$.
The combined state $|+\rangle_C \otimes |\Psi \rangle_{T}$ is then evolved through a controlled unitary ($C\hat{U}$) operation which embeds the unitary $\hat U=e^{-i \hat{H} t}$ for the evolution of $|\Psi \rangle_{T}$ according to the Hamiltonian $\hat{H}$, for a time $t$. The emerging control qubit state $\rho_C = {\rm Tr}_T(\rho)$ is then analysed by single-qubit state tomography. It is thus possible to calculate the von Neumann entropy $\mathcal{S}(\rho_T)=\mathcal{S}(\rho_C)$. 
The entropy acts as an \emph{eigenstate witness}: it is zero if the target state is an eigenstate of the Hamiltonian. In particular, for small $t$ the von Neumann entropy, along with the linear entropy, are upper and lower bounded by monotonic functions of the support of $|\Psi\rangle$ in the eigenbasis of $\hat{H}$, i.e. they are sensible measures of such support, see Supplementary Materials (SM) 1.

\begin{figure*}[ht!] \begin{center}
\makebox[\textwidth][c]{\includegraphics[width=1\textwidth]{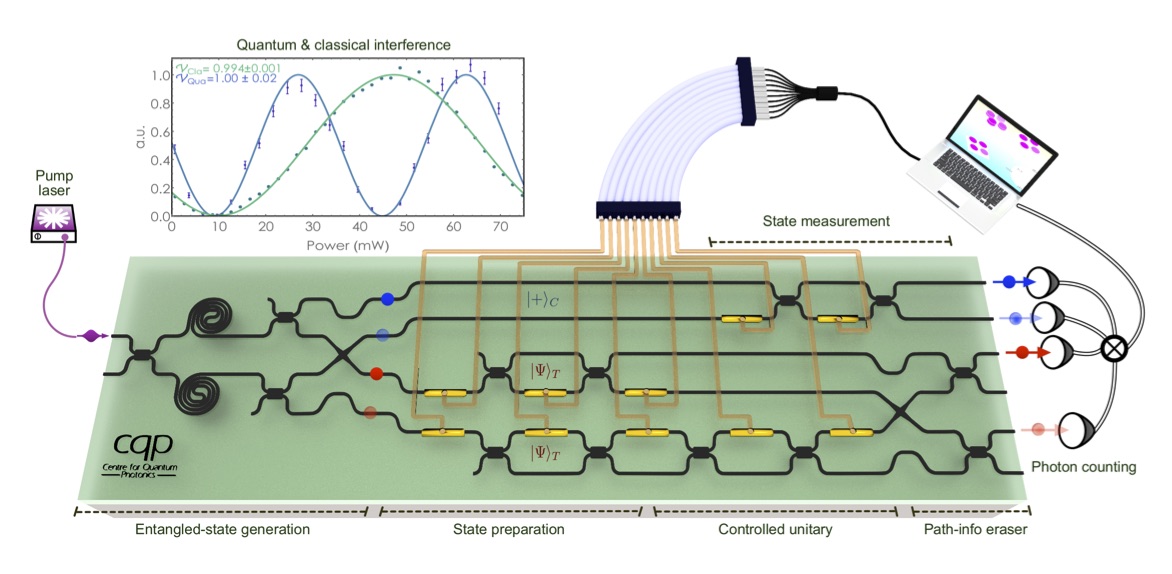}} 
 \caption{\textbf{Silicon quantum photonic processor.} 
 The quantum device enables one to produce maximally path-entangled photon states, perform arbitrary single-qubit state preparation and projective measurements, and more importantly perform any $C\hat{U}$ operation in the two-dimensional space.  
Photons are guided in the silicon waveguides and controlled by thermo-optical phase shifters.  Photon pairs are directly generated inside the silicon spiral sources through SFWM, off-chip filtered and post-selected by AWG filters (not shown), and measured by SNSPDs. The generated signal (blue) and idler (red) photons are different in wavelength, and form the control  and target qubits, respectively.  
The quantum chip is interfaced with a classical computer. Inset: high-visibility quantum (blue) and classical (green) interference fringes obtained in the device using the photon sources part and configuring the top final interferometer.  
The high visibility is essential to verify the high-performance  and correct characterisation of the device.    
} \label{figure2}
 \end{center}\end{figure*}

This measurement of the entropy enables us to variationally target excited states as well as the ground state. 
The control qubit also provides an energy estimator $\mathcal{E}=-\textit{Arg}[\langle \Psi|e^{-i \hat{H} t}|\Psi\rangle_T]/t$, evaluated using the off-diagonal elements of $\rho_C$. 
The variationally optimal ground state simultaneously minimises the entropy $\mathcal{S}(\rho_C)$ and the energy estimate $\mathcal{E}$.
Computationally, the task of finding the ground state can be interpreted as an optimisation problem using a physically motivated objective function $\mathcal{F}_{\textnormal{obj}}$ analogous to a
\textit{free energy}, $\mathcal{E}+T \mathcal{S}(\rho_C)= \mathcal{E} -T~ {\rm Tr}[\rho_C \ln(\rho_C)]$, \textcolor{black}{ where $T$ is a parameter that trades off between energy optimization and entropy optimization.}
We call $\mathcal{P}= {\rm Tr}[\rho^2_C]$ the purity of the reduced density matrix of the control qubit, which consents to approximate the von Neumann entropy with the linear entropy $1-\mathcal{P} $ in order to use the more practical objective function:  
\begin{equation}\label{eq:objfun}
\mathcal{F}_{\textnormal{obj}}(\mathcal{P},\mathcal{E})= \mathcal{E} -T \mathcal{P}= \mathcal{E} - T~{\rm Tr}[\rho^2_C] ,
\end{equation}
up to negligible constants.
The optimisation of $\mathcal{F}_{\textnormal{obj}}$ also permits one to identify excited states, because they occupy local minima in the high $T$ limit for almost all evolution times $t>0$ (SM~1). 
 Defining an initial reference state $|\Phi\rangle$ (usually obtained by mean-field approximations) and the complex vector $\vec\theta$ as the list of parameters describing the ansatz-based state preparation $\hat A(\vec\theta)$, i.e. $|\Psi\rangle_T= \hat A(\vec{\theta})|\Phi\rangle$, our algorithm proceeds as follows:
 
\begin{enumerate}
  \item Variationally search for the state parameters $\vec\theta_g$ that minimize the objective function $\mathcal{F}_{\textnormal{obj}}$, thus obtaining the unitary for the ground state $\hat A_g =  \hat A(\vec \theta_{g})$ . 
  \item Construct a unitary for an approximate $i^{\textnormal{th}}$ target excited state via $\hat E_{p_i} \hat A(\vec \theta_g )$, with $\hat E_{p_i}$ a system dependent perturbation. Variationally search for the $\vec \theta_{e_i}$ that minimises $\mathcal{F}_{\textnormal{obj}}$ in the high $T$ limit (entropy), obtaining the unitary for the target excited state $\hat A_{e_i} = \hat E_{p_i} \hat A(\vec \theta_{e_i})$.
  \item Using the $\hat A_g$ for the ground state or $\{\hat A_{e_i}\}$ for the excited ones in the state preparation, perform the IPEA which further projects each state onto the closest eigenstate \cite{Li2011} and refines the energy estimate.
\end{enumerate}
In this paper, we adopted a swarm optimisation method for the experimental variational searches, where for each iteration $\mathcal{F}_{\textnormal{obj}}$ is measured for a swarm of trial states (\textit{particles}), randomly sampled from a \textit{prior} distribution, and used to infer a \textit{posterior} with lower $\mathcal{F}_{\textnormal{obj}}$ (for more details on the optimisation method see SM~2).
The computational resources required by our variational method to learn eigenvalues of the Hamiltonian can be quantified by the number of controlled unitary operations performed in the simulation, which depend on the optimisation method used. Further breakdown in the cost estimates can be considered by decomposing the controlled unitary into fundamental gates using Trotter--Suzuki or Linear--Combination based methods \textcolor{black}{\cite{reiher2016elucidating}}, but here we ignore such issues for simplicity. For the particle swarm gradient-free optimisation, the computational cost of sampling from the eigenspectrum of $\hat{H}$,
is described by theorem~1. Whereas \textcolor{black}{ a version} for gradient based methods is reported in the Methods~1 \textcolor{black}{(both demonstrations can be found in SM~3)}.

\begin{theorem}\label{thm:1}
Let $\hat{H} \in \mathbb{C}^{2^n \times 2^n}$ be Hermitian and assume that after $k\in \{1,\ldots,N_{\rm iter}\}$ epochs the state $\ket{\psi_T(k)}= \sum_i \alpha_i(k)\ket{\lambda_i}$, where $\hat H\ket{\lambda_i}=\lambda_i \ket{\lambda_i}$ for $\lambda_i\ge 0$ and that the sequence of sets of particles $\{\Xi(k):=\{\vec{\theta}_j\}\}$
 satisfies 
 $\max_{\vec{\phi} \in \Xi(k)}\|\vec{\phi}-\mathbb{E}_{\vec{\theta} \in \Xi(k)} (\vec{\theta})\|_{\max}\le x_{\max}$ and $dim(\Xi(k))= N$ $\forall k$.  
 Then the number of applications of controlled $e^{-i\hat H t}$, for $[0,\pi/(2 \|\hat H \|))\ni t\in \Theta( \| \hat H \|^{-1})$, required for our particle swarm optimisation algorithm to learn an eigenvalue within error $\epsilon$ with probability at least $1/2$ is in
$$O\left(N_{\rm iter}N{\rm dim}{(\vec{\theta})} \left(\frac{\|\hat H \|^2}{\min_k \sum_i |\alpha_i(k)|^4}+T^2 \right)\left[\frac{\Gamma}{\delta} \right]^{2}+\frac{1}{\epsilon}\right)$$
where $\delta$ is the maximum error in the evaluation of $\mathcal{F}_{\rm obj}$ allowed and $\Gamma:= \max_k(x_{max}(k)/\epsilon_\mu(k), x_{\max}^2(k)/\epsilon_\Sigma^2(k))$, with $\{\epsilon_\mu^2(k)\}$ ($\{\epsilon_\Sigma^4(k)\}$) the corresponding tolerance in the (variance of the) trace of the covariance matrix of the sample mean. In particular, the $O (1/ \epsilon)$ term refers to the accuracy targeted by the phase estimation part of the protocol.
\end{theorem}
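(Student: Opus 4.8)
The plan is to separate the count of controlled-$e^{-i\hat H t}$ applications into the cost of the variational search (the first, product term) and the cost of the final phase estimation (the additive $O(1/\epsilon)$ term), since these two phases run sequentially and can be bounded independently. First I would write out the reduced state of the control qubit produced by the circuit of Fig.~\ref{figure1}b. With the control in $\ket{+}_C$ and the register in $\ket{\psi_T}$, tracing out the register gives a state of the form $\rho_C=\tfrac12\bigl(\hat I+\mathrm{Re}(g)\,X+\mathrm{Im}(g)\,Y\bigr)$ with $g=\bra{\psi_T}e^{-i\hat H t}\ket{\psi_T}=\sum_i|\alpha_i|^2 e^{-i\lambda_i t}$. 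Both estimators then read off this single object: $\mathcal{E}=-\mathrm{Arg}(g)/t$ from the off-diagonal phase and $\mathcal{P}=\mathrm{Tr}[\rho_C^2]=\tfrac12(1+|g|^2)$ from its modulus. The hypotheses $\lambda_i\ge 0$ and $t\in[0,\pi/(2\|\hat H\|))$ enter here: they keep every $\lambda_i t$ in $[0,\pi/2)$, so $\mathrm{Arg}(g)$ is free of aliasing and $\mathcal{E}$ is a genuine function of $g$ (exact on eigenstates).

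The technical core is bounding the number $M$ of circuit repetitions (hence controlled unitaries) needed to evaluate $\mathcal{F}_{\mathrm{obj}}=\mathcal{E}-T\mathcal{P}$ to additive error $\delta$ at one trial state. Each single-qubit outcome is a bounded ($\pm1$) variable, so the empirical estimates of $\mathrm{Re}(g)$ and $\mathrm{Im}(g)$ have variance $O(1/M)$, and I would propagate this through the two nonlinear maps. Through $\mathcal{P}=\tfrac12(1+|g|^2)$ the propagation contributes a term scaling as $T^2/\delta^2$. Through $\mathcal{E}=-\mathrm{Arg}(g)/t$ it contributes a factor $1/(t^2|g|^2)$; using $t\in\Theta(\|\hat H\|^{-1})$ turns $1/t^2$ into $\|\hat H\|^2$, and the range restriction does double duty, because with every eigenvalue in $[0,\|\hat H\|]$ all pairwise differences obey $|\lambda_i-\lambda_j|t<\pi/2$, making every cosine cross-term in $|g|^2=\sum_{i,j}|\alpha_i|^2|\alpha_j|^2\cos((\lambda_i-\lambda_j)t)$ non-negative, whence $|g|^2\ge\sum_i|\alpha_i|^4$. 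This supplies the factor $\|\hat H\|^2/\sum_i|\alpha_i|^4$, and taking the worst epoch gives the stated $\bigl(\|\hat H\|^2/\min_k\sum_i|\alpha_i(k)|^4+T^2\bigr)$ in the denominator/variance factor.

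Next I would connect the per-particle precision budget to the optimiser's internal statistics. The swarm update at epoch $k$ is driven by the empirical mean and the trace of the covariance of the particle cloud, which must be resolved to tolerances $\epsilon_\mu(k)$ and $\epsilon_\Sigma(k)$ for the update to be faithful; since the cloud has radius at most $x_{\max}(k)$, translating these moment tolerances into a precision requirement on $\mathcal{F}_{\mathrm{obj}}$ introduces exactly the dimensionless ratios collected in $\Gamma=\max_k\bigl(x_{\max}(k)/\epsilon_\mu(k),\,x_{\max}^2(k)/\epsilon_\Sigma^2(k)\bigr)$, so the shot count per particle carries the factor $[\Gamma/\delta]^2$. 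Summing over the $N$ particles, over the $\dim(\vec\theta)$ coordinates entering the covariance trace, and over the $N_{\mathrm{iter}}$ epochs assembles the full first term; the constant $1/2$ success probability is then secured by standard amplification of each estimate (median-of-means/Chernoff), whose overhead I suppress into the implied constants.

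Finally, the variational output is handed to IPEA, which by the standard Heisenberg-limited analysis reaches precision $\epsilon$ with $O(1/\epsilon)$ applications of the controlled unitary and additionally projects the trial state onto the nearest eigenstate~\cite{Li2011}; this is the additive $O(1/\epsilon)$ term, and summing the two contributions yields the claim. I expect the main obstacle to be the third step rather than the error propagation: certifying that the \emph{classical} swarm dynamics remain convergent (or at least faithful to the noiseless update) when every objective value is known only to within $\delta$, and pinning down precisely how $\epsilon_\mu,\epsilon_\Sigma$ must scale with $x_{\max}$, is the delicate part, whereas the coherence lower bound $|g|^2\ge\sum_i|\alpha_i|^4$, the IPEA scaling, and the epoch/particle bookkeeping are comparatively routine.
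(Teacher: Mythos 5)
Your decomposition into a variational cost plus the additive $O(1/\epsilon)$ IPEA term, your computation of $\rho_C$ and of both estimators from $g=\sum_i|\alpha_i|^2e^{-i\lambda_i t}$, and your shot-count analysis --- variance $O(1/M)$ propagated through $\mathcal{P}=\tfrac12(1+|g|^2)$ to give $T^2/\delta^2$ and through $\mathcal{E}=-\mathrm{Arg}(g)/t$ to give $\|\hat H\|^2/(|g|^2\delta^2)$ with $|g|^2\ge\sum_i|\alpha_i|^4$ --- all match the paper's proof, which imports exactly this error-propagation step from the proof of Theorem~2. Your derivation of the lower bound on $|g|^2$ via non-negativity of the cosine cross terms under the stated range of $t$ is, if anything, slightly cleaner than the paper's separate term-by-term bounds on $z^2$ and $y^2$.

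The gap is in where the factor $\dim(\vec\theta)\,\Gamma^2$ comes from. You attribute it to the per-particle shot count, claiming that the moment tolerances $\epsilon_\mu,\epsilon_\Sigma$ translate into a precision requirement on $\mathcal{F}_{\rm obj}$, and you concede you have not pinned down that translation. In the paper this factor has nothing to do with measurement precision: it is a purely classical bound on the swarm size. The sample mean of $S$ accepted particles whose coordinates deviate from the cloud mean by at most $x_{\max}$ has per-coordinate variance at most $x_{\max}^2/S$, so the trace of the covariance matrix of the sample mean is at most $\dim(\vec\theta)\,x_{\max}^2/S$; requiring this to be at most $\epsilon_\mu^2$ forces $S\in\Theta(\dim(\vec\theta)\,x_{\max}^2/\epsilon_\mu^2)$, and the analogous bound on the variance of the sample variance, which is $O(x_{\max}^4/S)$ per coordinate, forces $S\in\Theta(\dim(\vec\theta)\,x_{\max}^4/\epsilon_\Sigma^4)$. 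Together these give $S\in\Theta(\dim(\vec\theta)\,\Gamma^2)$, and with $N\in O(S)$ this multiplies the $O(NN_{\rm iter})$ evaluation count, each evaluation already costing $(\|\hat H\|^2/\min_k\sum_i|\alpha_i(k)|^4+T^2)/\delta^2$. Your proposed route would not generate the ratios $x_{\max}/\epsilon_\mu$ and $x_{\max}^2/\epsilon_\Sigma^2$, so you should replace the ``precision requirement on $\mathcal{F}_{\rm obj}$'' step with this swarm-size argument. (Your concern about certifying convergence of the noisy swarm dynamics is legitimate but moot for reproducing this theorem: the paper does not certify it either, and simply treats $N_{\rm iter}$ as a given parameter.)
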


The above theorem implies that \textcolor{black}{the number of applications of the controlled unitary  required} only scales implicitly with the number of spin-orbitals ($n$) in the system. 
In this regime, the relevant scaling parameter for iteration cost is the dimension of the parameter space. 
\textcolor{black}{The problem of finding an appropriate ansatz is beyond the scope of this work: it is expected though to be polynomial in the number of spin-orbitals for many physically relevant systems~\cite{Romero:2017vp,Jarrod2016, Wecker2015, Georgescu:2014um, AspuruGuzik2005}.}
Similarly, the number of swarm particles required ($N$) and the number of variational steps ($N_{iter}$) depend on both the dimension of the relevant parameter space and prior knowledge about the solution.  Because the particles are moved towards the true model as the algorithm learns, $N$ is expected  to scale polynomially~\cite{beskos2014stability} for problems such as chemistry, where a good ansatz and high degree of prior knowledge is possible. 
 The number of variational parameters together with the number of swarm particles required for these specific ans{\"a}tze to achieve chemical accuracy will likely require empirical studies to be precisely estimated. 
 If Trotter methods are also taken into account for the simulation then there is a factor of roughly $n^{5.5}$ multiplied by the above costs~\cite{poulin2015trotter}.

Another fundamental point is how to choose the excitation operators used in the excited state variational search.
 Consistent choices for the system and state-specific perturbing unitaries $\hat E_{p_i}$, required to construct the excited states,
can be inferred from readily computable properties of the simulated system \cite{Jarrod2016b}. General many-body Hamiltonians for interacting particles decompose into $\hat H = \hat H_0 + \hat V$, where $\hat H_0 = \sum_i \epsilon_i \hat a_i^\dagger \hat a_i$ is a one-particle term and $\hat V$ is an interaction term. Because $\hat H_0$ dominates $\hat H$, a transition from the ground state to an excited state can be approximated by the action of a sequence of single excitation operators $\hat a_i^\dagger \hat a_j$, each with a corresponding unitary
$\hat E_p = \exp\left[\pi/2 (\hat a_i^\dagger \hat a_j - \hat a_j^\dagger \hat a_i)\right]$. 
If excitation operators, that are based on the Hartree-Fock approximation, do not provide sufficient accuracy, then alternative approximations can be used.  Advanced methods such as multi-configuration self-consistent-field (MCSCF) approximations~\cite{helgaker2014molecular} may ultimately be needed for hard instances with substantial electron correlations. In Methods~2, we  further discuss how post-Hartree-Fock methods can be used to tackle these problems through the use of natural orbitals.

\textbf{Silicon quantum photonic chip and experimental setup.}
The experimental demonstration of  WAVES  was performed on a two-qubit  
silicon quantum photonic processor schematically described in Fig.~\ref{figure2}. The device is fabricated via deep-UV lithography on a silicon-on-insulator (SOI) wafer 
with $\SI{450}{\nano \meter} \times \SI{220}{\nano \meter}$ single-mode waveguides. 
A continuous-wave (CW) pump laser in the telecom C-band with an off-chip power of approximately $\SI{10}{\milli \watt}$ is coupled into the chip using polymer spot-size converters and lensed-fibres, with a facet loss of approximately $8$~$\text{dB}$.
Pairs of single photons are generated in two $1.2$-cm-long waveguide spiral sources  through spontaneous four-wave mixing (SFWM)~\cite{Sharping2006}. Output photons are filtered using arrayed waveguide gratings (AWG) with a $\SI{0.9}{\nano \meter}$ bandwidth, spectrally selecting signal photons (blue) for the control qubit and idler ones (red) for the target. 
The photons are detected by superconducting nanowire single-photon detectors (SNSPDs) with approximately $85\%$ efficiency, obtaining a maximum photon coincidence rate of $\approx \SI{150}{\hertz}$. 
Optical interferometers consisting of thermo-optic phase shifters and multi-mode interferometer (MMI) beam-splitters are used for photonic qubit
 manipulation and analysis, driven by an electronic controller with 12-bit digital-to-analogue converters. The automation for the WAVES algorithm, including the control of quantum gates, the data collection and real-time analysis, are realised by a classical computer interfaced with the quantum photonic chip.   
More experimental details are reported in SM~5. 

Because of the low power CW pump employed in our experiment, multi-photon terms can be safely neglected. The use of the two spiral sources generates the Fock state  $(|20\rangle + |02\rangle)/\sqrt{2}$. 
High-visibility two-photon quantum interference ($\mathcal{V}_{\text{Qua}} = 1.00\pm0.02$) was observed in this experimental setup, as shown in the inset of Fig.~\ref{figure2}.  
The  photons are probabilistically split by two MMIs and then swapped by a waveguide crossing, yielding the maximally path-entangled state  $(|1010\rangle + |0101\rangle)/\sqrt{2}$ in the Fock basis~\cite{Silverstone2015, Wang2015}. 
The state of the target photon is then expanded by adding two optical spatial modes. These additional modes represent the two components of the target qubit, which is prepared in $|\Psi\rangle_T$ for both paths and undergoes an arbitrary $\hat{U}$.
That is, the operation performed on the target qubit --- either $\hat I$ or $\hat U$ --- depends on which path the photon is travelling on, indicated by $|0\rangle_P$ or $|1\rangle_P$. Path, in turn, is controlled by the state of the control qubit (the two qubits are entangled), $|0\rangle_C$ or $|1\rangle_C$, which yields a superposition of $\hat I$ and $\hat U$  gates in the circuit: 
\begin{equation}
\frac{1}{\sqrt{2}}(|0\rangle_C \otimes \hat I|\Psi\rangle_T\otimes|0\rangle_P+|1\rangle_C\otimes \hat U|\Psi\rangle_T\otimes|1\rangle_P). 
\end{equation}
By erasing the path information with the use of an additional waveguide crossing and two final MMIs and by detecting the signal photon and idler photon, we obtain the controlled unitary $C\hat{U}$ operation \cite{Zhou2011,Patel2016,Paesani2017,Wang:2017fr}
\begin{equation}
\frac{1}{\sqrt{2}}(|0\rangle_C \otimes \hat I|\Psi\rangle_T+|1\rangle_C\otimes \hat U|\Psi\rangle_T).
\end{equation}
 \begin{figure*}[ht!] \begin{center}
\makebox[\textwidth][c]{\includegraphics[width=0.75\textwidth]{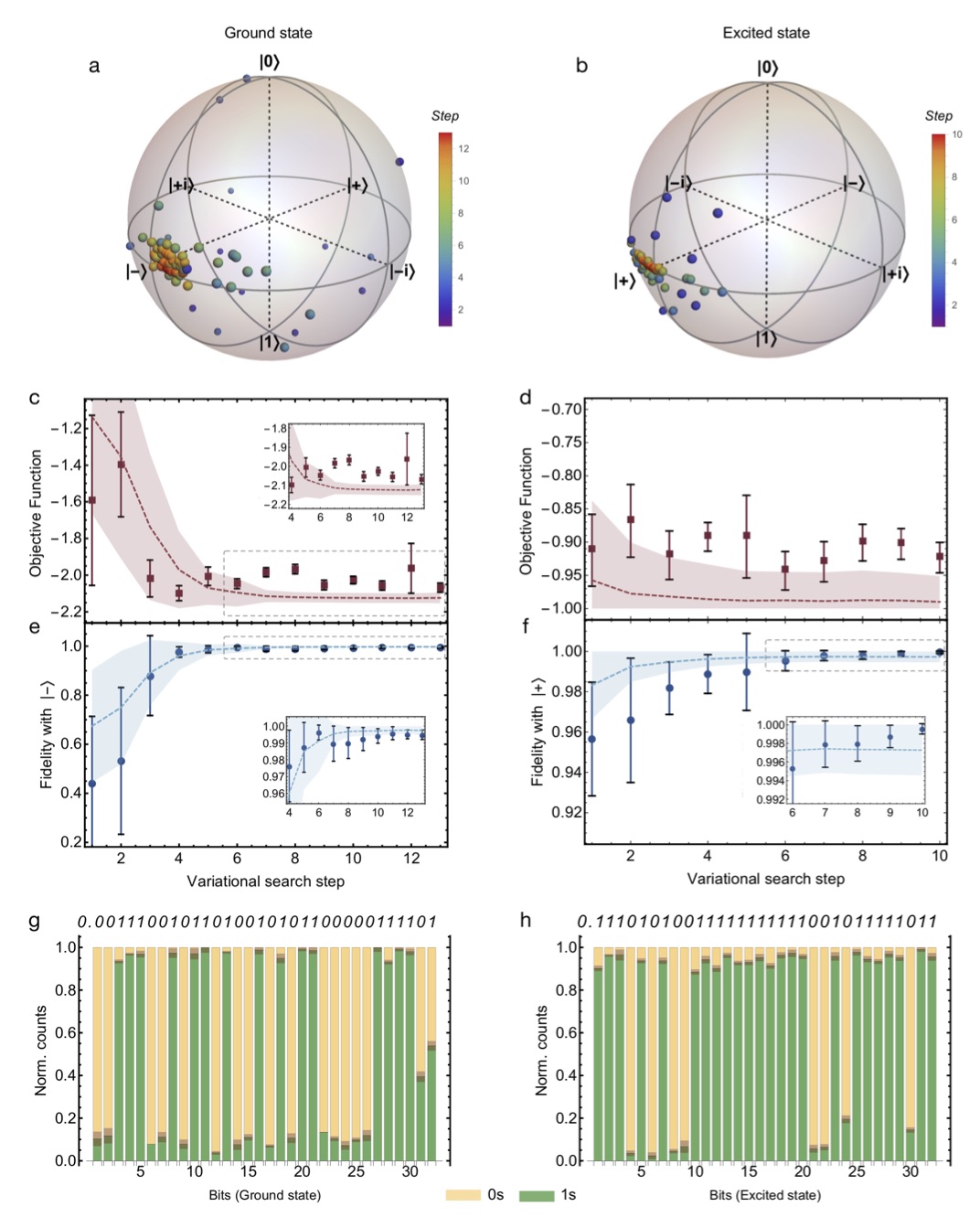}}
\caption{\textbf{Experimental results}.  
A Hamiltonian representing a single-exciton transfer  between two chlorophyll units is implemented on the silicon quantum photonic device for an experimental test of the protocol. 
(\textbf{a}) and (\textbf{b}) Colour-coded evolution of the particle swarm for the WAVES search of the ground state ($|-\rangle$) and excited state ($|+\rangle$) shown on the Bloch spheres. Different colours correspond to different steps of the search protocol. For the ground and the excited state searches we report in: (\textbf{c}) and (\textbf{d}) the evolution of $\mathcal{F}_{\textnormal{obj}}$;
(\textbf{e}) and (\textbf{f}) the fidelity ($F=|\langle\Psi|\Psi_{\text{ideal}}\rangle|^2 $) versus search steps, converging to a final value of $99.48\pm 0.28\%$ and $99.95\pm 0.05\% $ respectively. Error bars are given by the variance of the particles distribution and photon Poissonian noise.
Dashed lines are numerical simulations of the performance of the algorithm, averaged over 1000 runs, with shaded areas representing a 67.5 \% confidence interval.
Insets: behaviour close to convergence.
(\textbf{g}) and (\textbf{h}) Normalised photon coincidences used to calculate the 32 IPEA-estimated bits of the eigenphase for both eigenstates. The theoretical bit value is reported above each bar. Errors arising from  Poissonian noise are reported as shaded areas on the bars. 
}\label{figure3} 
 \end{center}\end{figure*}
 
Note that this approach implements the $C\hat U$ gate without decomposing it into multiple two-qubit gates~\cite{Lanyon2010}. 
The state preparation is realised by $\hat A=e^{i \phi_a} e^{i \phi_b \hat \sigma_z/2}e^{i \phi_c \hat \sigma_y/2}$ operations, while the $\hat U$ used to map the Hamiltonian is obtained by $\hat U=e^{i \phi_c \hat \sigma_z/2}e^{i \phi_d \hat \sigma_y/2}e^{i \phi_e \hat \sigma_z/2}$, where $\phi_i$ are  phases applied by on-chip thermal phase-shifters.
The control qubit undergoes another single-qubit operation that can be used to perform the required operations both for tomography and for the IPEA.

\begin{figure*}[ht!] \begin{center}
\makebox[1.\textwidth][c]{\includegraphics[width=1.1\textwidth]{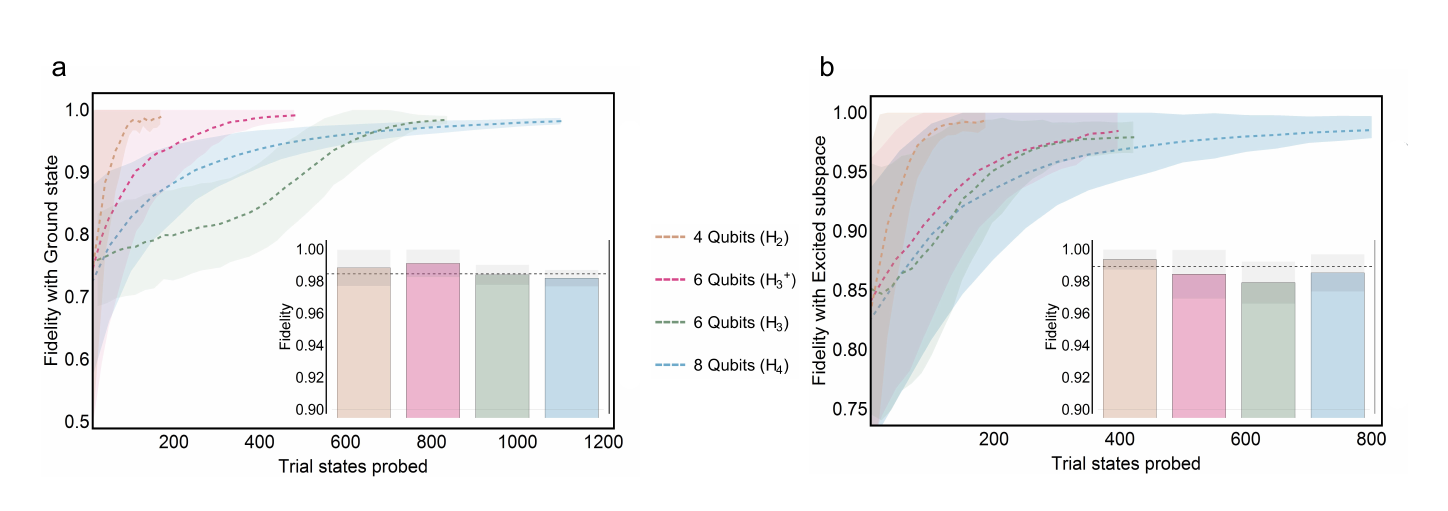}} 
\caption{\textbf{Numerical simulations for high-dimensional Hamiltonians}.  
The cases studied refer to molecular Hydrogen systems ($\text{H}_2$, $\text{H}_3^+$, $\text{H}_3$, $\text{H}_4$) with the full parametrised-Hamiltonian ansatz. 
(\textbf{a}) Variational search for the ground state of each physical system.
(\textbf{b}) Variational search for the targeted subspace of degenerate excited states with an initial excitation perturbation $\hat{E}_{p_i}$. 
On the x-axis we refer to the cumulative number of trial states probed, i.e. the number of particles in the swarm times the variational steps. For ease of comparison, the x-axis origin has been shifted in (\textbf{a}) for the various cases to have equivalent fidelity for the average initial guess. 
Dashed lines report average fidelities with the shaded areas indicating a $67.5\%$ confidence interval. 
The average fidelities achieved by the particle swarm optimisation for both ground and excited states, are calculated for 100 independent runs of WAVES. 
In all simulations a binomial noise model has been taken into account when performing projective measurements. 
Insets: bar-charts summarising final fidelities obtained by each search.
All the simulations converged to the same high fidelity within errors, as indicated by the dashed black line in the inset.}
\label{figure4}
 \end{center}\end{figure*}

\textbf{Experimental results.}
We used the quantum photonic chip to perform WAVES, calculating the eigenspectrum of a simplified exciton transfer Hamiltonian of two Chlorophyll units in the 18-mer ring of the LHII complex. 
The Hamiltonian is parametrised by $\alpha \simeq 1.46$ eV, the exciton energy of a single chlorophyll unit, and $\beta \simeq 0.037$ eV, the coupling strength between the two units~\cite{photosynthesis}. This $2\times 2$ Hamiltonian is   
written as $\hat{H} = (\alpha-\ell)\hat I + \beta \hat \sigma_x$, in the basis of Pauli operators~\cite{Jarrod2016}, where $\ell$ is a reference energy that can be chosen arbitrarily (see Methods~3). 
For this Hamiltonian the perturbing unitary for the excited state corresponds to $\hat E_p = e^{i \pi \hat \sigma_z / 2}$.

In Fig.~\ref{figure3} we show the experimental results of the WAVES approach for the ground and excited state search. The minimisation of the objective function was performed in both cases adopting the particle-swarm method outlined above.
In the experiment, the energy $\mathcal{E}$ and purity $\mathcal{P}$ used to evaluate $\mathcal{F}_{\textnormal{obj}}$ were obtained performing single-qubit tomography of the control photon.
In Figs.~\ref{figure3}a,b we report color-coded  evolution of the swarm, achieving rapid convergence of the particle distribution towards the expected eigenstates of the Hamiltonian: the ground state $|-\rangle$ 
and the first excited state $|+\rangle$. 
For the ground state search, pessimistically assuming no pre-existing knowledge of the system, the prior is initialised to span uniformly the sub-section of the Hilbert space identified by the ansatz.
For the excited state, instead, the search is initialised with the guessed state obtained by applying $\hat{E}_p$ to the ground state.

As shown in Figs.~\ref{figure3}c,d, within 10-13 search steps the $\mathcal{F}_{\textnormal{obj}}$ is converging well to its minimal value, which is corresponding to the ground state and excited state, respectively. 
 Figs.~\ref{figure3}e, f report that the mean of the particles distribution achieves a high overlap with the eigenstate targeted: fidelity of $99.48\pm 0.28\%$ with the ground state and $99.95\pm0.05\%$ with the excited one. 
All uncertainties are given by the variance of the prior distributions: 
a well motivated error bar is among the amenable features deriving from the adoption of a swarm optimisation method.

The successful convergence of $|\Psi\rangle_T$ is achieved by optimising the $\mathcal{F}_{\textnormal{obj}}$ function. 
In particular, for the ground state search we used a small value of T ($T=1.25$) in $\mathcal{F}_{\textnormal{obj}}$, while for the excited state case we used  $\mathcal{F}_{\textnormal{obj}} \equiv - \mathcal{P}$, i.e. large-$T$ limit.  
Imperfect measurements of $\mathcal{F}_{\textnormal{obj}}$, more evident in the regime close to convergence, are due mainly to experimental noise in the phase settings, given by residual thermal cross-talk among the phase-shifters. The fast convergence of the algorithm, however, indicates a good robustness of the protocol to this kind of experimental noise.

After the eigenstate search, the WAVES algorithm embeds the IPEA to improve the accuracy of eigenvalues estimate~\cite{AspuruGuzik2005, Lanyon2010}. 
In our implementation we took advantage of the circuit reconfigurability, mapping each $\hat U^{2^{k}}$ directly into the chip parameters (Fig.~\ref{figure1}c).
However, in universal quantum computers, $\hat U^{k}$ can be efficiently achieved without classical pre-compilation by cascading $k$ copies of $\hat U$~\cite{OMalley2016}. 
The IPEA estimated the binary fraction expansion of the eigenphase $\varphi (\text{mod}~2\pi)$ for both the ground and excited state energies up to $32$ bits (i.e. a precision of $2.9 \times 10^{-9}$~eV). The normalised photon counts are reported in Figs.~\ref{figure3}g,h for all the $32$ bits. Such precision is higher than what is typically achievable by spectroscopic methods.

\textbf{Numerical results for high-dimensional systems.}
We complement these proof of principle experimental results with a set of numerical simulations, providing insight into the performance of our approach  when applied to more complex Hamiltonians. 
For our numerical tests, we chose a set of molecular Hydrogen systems ($\text{H}_2$, $\text{H}^+_3$, $\text{H}_3$ and $\text{H}_4$). 
The corresponding Hamiltonians (up to 8 qubits) are represented in a STO-3G basis~\cite{Hehre:1969} in the Jordan-Wigner representation~\cite{Jarrod2016} (see Methods~4) and exhibit several degeneracies in the spectrum. For each set of degenerate excited states we will refer generically to the \textit{excited subspace} they span.

Figs.~\ref{figure4}a,b report the simulation results of the ground state search and some exemplary excited states \textit{variational} searches, respectively, 
addressing the latter ones with a set of excitation operators of the form $\hat{E}_{p_i}$. Note that this is only the first (variational) part of WAVES and that the second part (IPEA) will further project the state found into the eigenstate with a higher overlap. This feature is absent in previous VQE implementations.

For the different cases, we increased the number of particles to 8, 16, 30 and 50 for $\text{H}_2$, $\text{H}^+_3$, $\text{H}_3$ and $\text{H}_4$, respectively, which follows approximately linearly the number of parameters involved in the ``parametrized Hamiltonian'' ansatz provided in SM~7 and adopted for these simulations. 
In all the cases investigated here, WAVES is able to consistently find both the ground and excited states with high fidelities ($\simeq 99\%$ in average). The insets of Fig.~\ref{figure4} show that the final fidelities achieved by each variational search do not decrease (within errors tolerance) when increasing the size of the Hilbert space. Although these study cases do not imply scalability of the approach, they provide an encouraging result, suggesting that to keep constant the algorithm performances with the dimensionality of the problem, a sub-exponential increase in number of particles and iterations is enough, provided that a polynomial parametrisation applies.

\section*{Discussion}

\textcolor{black}{We have introduced the concept of eigenstate witness and used it to develop the witness-assisted variational eigenspectra solver (WAVES), a new quantum method for targeting both ground and excited states of a physical Hamiltonian.} 
\textcolor{black}{We showed its proof of principle implementation on a silicon quantum photonic chip for a simplified exciton transfer Hamiltonian, obtaining its eigenstates with high fidelities and estimating the eigenvalues up to spectroscopic accuracy.}
 \textcolor{black}{Additional analysis of WAVES performances is provided by numerical simulations, where the protocol yields eigenstate estimates with high fidelity for Hamiltonians of up to 8 qubits.}

All states found  using the variational search, both in the experiments and in the numerical simulations, exhibited  high fidelities with the target eigenstates. This preliminary refinement  provides IPEA with an improved approximation of the target eigenstate, leading to an exponentially higher success probability in estimating the corresponding eigenvalue and reducing the overall complexity.
Using IPEA, in addition to the variational search, allows the projection onto the eigenvectors, which is not guaranteed by the solely variational methods using a polynomial-sized ansatz. 
\textcolor{black}{As the size of the system simulated increases, the shrinking of some energy gaps may lead to eigenstates close in energy being sensed as effectively degenerate by the variational witness, within the precision achievable by the experimental platform of choice. In such cases, the  VQE-refined guess will exhibit consistent overlap with more than one eigenstate.} Nevertheless, careful modifications  to the  phase  estimation  procedure  may allow one to learn exponentially quickly either one of eigenvalues belonging to almost-degenerate eigenstates (see SM~4). 
In summary, the variational search acts as a state preparation stage for the phase estimation, while the IPEA step addresses the shortcomings present in the variational ansatz.

\textcolor{black}{From our study, for a particular ansatz (e.g. unitary coupled cluster) and using low order Trotter formulas, the time required by WAVES is expected to explicitly scale with the problem size as $O(Mn^{5.5})$, where $n$ is the number of spin-orbitals and $M$ is the number of variational parameters, in accordance with simulation results reported} in~\cite{reiher2016elucidating} for systems difficult to simulate classically such as Nitrogenase. This does not automatically imply that WAVES or any other eigenstate preparation method is efficient for any Hamiltonian, because that would imply that $\NP \subseteq \BQP$~\cite{cipra2000ising}, which is widely believed to be false. \textcolor{black}{However, an optimisation based upon an eigenstate witness allows the variational algorithm to address the problem of {efficiently} estimating an eigenvalue in the vicinity of a generic targeted state, in those cases where a polynomial-sized ansatz can be provided. This problem is in general expected to be hard on classical machines, and it is challenging to solve also using traditional variational methods.}

WAVES offers 
key improvements over previous protocols.
First, \textcolor{black}{for those instances where a good ansatz is found, it can be used to locate excited states with a quantum method in a purely variational manner, in contrast to quantum-classical linear response methods~\cite{Jarrod2016b}. Such methods avoid the need for additional non-linear optimisation, but this may limit their accuracy, and they do not yet utilize quantum phase estimation to improve the final accuracy and readout precision as in WAVES. 
Furthermore, one can speculate how the eigenstate witness provides an independent test of the protocol's success}, detecting failure cases of convergence to local optima that do not represent a single eigenstate nor excited subspace (see also SM 7).
These advantages come at the cost of controlling the evolution of the target register with an ancillary qubit, which is avoidable in previous VQE proposals. Also, in WAVES,  the ability to find specific eigenstates relies on the quality of the excitation operators. Further optimisation on the objective function, e.g. including the use of an energy penalty, can in principle overcome some of these limitations.

In terms of resource costs, the use of IPEA gives a quadratic speed-up compared to standard VQE in estimating the energy of an eigenstate within a chosen precision. Such advantages are significant given the high accuracy required in quantum chemical applications~\cite{Wecker2015}.  Moreover, WAVES does not require lengthy adiabatic preparation of targeted eigenstates~\cite{AspuruGuzik2005, Du2010, Jarrod2016}, nor an increase in the number of terms of the implemented Hamiltonian 
\textcolor{black}{and a precise knowledge of the spectral gap, unlike the folded spectrum (FS) method.} 
In particular, the FS method reduces to variational optimisation of a shifted and squared Hamiltonian $H'=(H-\lambda)^2$, thus squaring both its norm and number of terms.  The choice of the shift parameter $\lambda$ can also result in accidental degeneracies in the spectrum and dramatic closing of small spectral gaps.  In the case of a problem such as quantum chemistry, this can lead to $O(n^8)$ terms formally in the Hamiltonian, drastically increasing the cost and making it cumbersome for even small instances~\cite{Peruzzo2014,Jarrod2016}.  Moreover, the ultimate accuracy of folded spectrum methods matches the accuracy of the variational ansatz used.  However WAVES corrects essentially all of these difficulties.  It retains the original norm, spectrum, and number of terms in the Hamiltonian $O(n^4)$, does not depend upon a shift parameter, and exceeds the accuracy of the variational ansatz used through projective phase estimation.  \textcolor{black}{In SM~6 and SM~7, the interested reader can find numerical simulations comparing WAVES with previous VQE implementations, and with the folded spectrum method.}
The results show, in particular, that the FS method finds states with poor overlap with any true eigenstate in cases exploiting its weaknesses, whereas WAVES provides estimates exceeding $99 \%$ fidelity with a correct eigenstate in all cases tested. This direct comparison indicates higher reliability for WAVES, adding to the improvements in terms of resource costs.

\textcolor{black}{In conclusion, WAVES is a new approach to tackle the search for both ground and excited states of physical Hamiltonians. 
The analysis performed shows that the method is expected to be scalable, under the assumption that a good ansatz can be found. The experimental demonstration on a quantum photonic chip and numerical simulations show the method performance on small scale scenarios, indicating good noise resilience properties and better performances if compared to previous approaches.}
Our algorithm is in principle amenable to short circuit depths and leverages methods known to exhibit error robustness, thus enabling near-term experiments on non-fault tolerant machines.
\textcolor{black}{By introducing new objective functions for variational algorithms, this protocol opens the way to the investigation of new methods for computing Hamiltonian spectra, and represents a promising tool for future developments of quantum simulation on quantum computers.}

\section*{Materials and Methods}\label{Methods}

\footnotesize
\pseudosection{1. Gradient based methods computational cost of WAVES. }
In the following theorem the computational cost for the case of gradient based methods is reported.  Proof is given in the SM 3.1.

\begin{theorem}\label{thm:2}
Let $\hat H\in \mathbb{C}^{2^n \times 2^n}$ be Hermitian and assume that after $k\in \{1,\ldots,N_{\rm iter}\}$ epochs the state $\ket{\psi_T(k)}= \sum_i \alpha_i(k)\ket{\lambda_i}$ where $\hat H\ket{\lambda_i}=\lambda_i \ket{\lambda_i}$ for $\lambda_i\ge 0$.
Further, assume there exists a numerical differentiation formula that evaluates $\partial_{\theta_q} {\mathcal{F}}_{obj}$ using a constant number of function evaluations on a grid of spacing $h>0$ within error at most $\kappa (h\Lambda)^p$ for positive $\kappa,\Lambda$ for $p\in \Theta(1)$.
 Then the number of applications of controlled $e^{-i\hat H t}$, for $(0,\pi/(2 \|\hat H\|))\ni t\in \Theta( \|\hat H\|^{-1})$, required in the algorithm is in

\begin{equation*}
O\left(N_{\rm iter}\kappa^{\frac{2}{p}}\Lambda^2{\rm dim}{(\vec{\theta})} \left(\frac{\|\hat H\|^2}{\min_k \sum_i |\alpha_i(k)|^4}+ T^2 \right)\left[\frac{{\rm dim}(\vec{\theta})}{\delta} \right]^{\frac{2p+4}{(p+1)}}+\frac{1}{\epsilon}\right)
\end{equation*}
where $\delta$ is the maximum error in the two-norm of the gradient of $\mathcal{F}_{\rm obj}$ allowed and $\epsilon$ is the maximum error allowed in phase estimation of the final system with probability $1/2$. 
\end{theorem}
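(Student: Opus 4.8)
The plan is to bound separately the two additive contributions visible in the claimed complexity: the cost of the variational gradient-descent phase, which produces every factor except $1/\epsilon$, and the cost of the concluding iterative phase-estimation step, which contributes the additive $O(1/\epsilon)$ through the standard Heisenberg-limited scaling of IPEA applied to the prepared approximate eigenstate. Since phase estimation succeeds with constant probability and the $1/2$ success bound is inherited from it, I would analyse the two stages independently and combine them by a union bound at the end. The substance is therefore the per-iteration cost of the gradient search, which I would decompose as (number of epochs $N_{\rm iter}$) $\times$ (number of gradient components ${\rm dim}(\vec{\theta})$) $\times$ (constant number of function evaluations per component, supplied by the differentiation formula) $\times$ (number of applications of controlled $e^{-i\hat H t}$ per evaluation of $\mathcal{F}_{\rm obj}$).

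First I would pin down the cost of a single evaluation of $\mathcal{F}_{\rm obj}=\mathcal{E}-T\mathcal{P}$ to additive precision $\sigma$. Writing the reduced control-qubit state after the controlled evolution, its off-diagonal element equals $z:=\langle e^{-i\hat H t}\rangle=\sum_i |\alpha_i|^2 e^{-i\lambda_i t}$, from which $\mathcal{P}=\tfrac12(1+|z|^2)$ and $\mathcal{E}=-\mathrm{Arg}(z)/t$ are read off by single-qubit tomography. The energy estimator is the delicate one: estimating $z$ in the plane to Cartesian precision $O(1/\sqrt{m})$ from $m$ binomial samples yields an argument, hence an energy, of precision $O(1/(|z|\,t\sqrt{m}))$, so controlling $\mathcal{E}$ to precision $\sigma$ costs $m=O(1/(|z|^2 t^2\sigma^2))$ samples. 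The key lemma is the lower bound $|z|^2=\sum_i|\alpha_i|^4+\sum_{i\neq j}|\alpha_i|^2|\alpha_j|^2\cos((\lambda_i-\lambda_j)t)\ge \sum_i|\alpha_i|^4$, valid precisely because $\lambda_i\ge 0$ and $t\in[0,\pi/(2\|\hat H\|))$ force every phase $(\lambda_i-\lambda_j)t$ into $(-\pi/2,\pi/2)$ and hence every cosine to be positive; this, together with $t\in\Theta(\|\hat H\|^{-1})$, converts $1/(|z|^2 t^2)$ into the factor $\|\hat H\|^2/\sum_i|\alpha_i|^4$. The purity term contributes separately: resolving $T\mathcal{P}$ to precision $\sigma$ needs $\mathcal{P}$ to precision $\sigma/T$, costing $O(T^2/\sigma^2)$ samples. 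Taking the worse requirement over epochs produces the bracket $\big(\|\hat H\|^2/\min_k\sum_i|\alpha_i(k)|^4+T^2\big)/\sigma^2$ measurements per evaluation.

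Next I would convert the function-evaluation precision $\sigma$ into the stated gradient precision $\delta$ through the differentiation formula. The finite-difference estimate of $\partial_{\theta_q}\mathcal{F}_{\rm obj}$ carries two errors: the truncation error $\kappa(h\Lambda)^p$ and the statistical error, which the division by the grid spacing amplifies to $O(\sigma/h)$. Distributing the allowed two-norm error equally over the ${\rm dim}(\vec{\theta})$ components fixes a per-component budget $\delta_c=\delta/\sqrt{{\rm dim}(\vec{\theta})}$, and I would choose $h$ to minimise the sampling cost $1/\sigma^2$ subject to $\kappa(h\Lambda)^p+O(\sigma/h)\le\delta_c$. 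Optimising the grid spacing to balance these two contributions expresses $\sigma$ as a power of $\delta_c$ (and of $\kappa,\Lambda$), and substituting back yields the factor $\kappa^{2/p}\Lambda^2[{\rm dim}(\vec{\theta})/\delta]^{(2p+4)/(p+1)}$. Multiplying this per-evaluation cost by $N_{\rm iter}$, by the ${\rm dim}(\vec{\theta})$ components, by the constant number of evaluations per component, and by the bracket from the previous step gives the variational term; adding the $O(1/\epsilon)$ phase-estimation term completes the bound.

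I expect the grid-spacing optimisation of the third step to be the main obstacle, since it is where the exponent $(2p+4)/(p+1)$ is actually determined and where the argument is least mechanical: one must track how the statistical amplification $1/h$ couples to the finite-difference order $p$ and to the dimension-dependent budget $\delta_c$, and argue that the balanced choice is genuinely optimal (not merely feasible) so that the sampling cost is not over-counted. A secondary subtlety is propagating the binomial/tomographic noise model consistently, so that the additive errors in $\mathcal{E}$ and in $T\mathcal{P}$ combine into a single guaranteed precision $\sigma$ on $\mathcal{F}_{\rm obj}$, and so that the failure probabilities across all $N_{\rm iter}\,{\rm dim}(\vec{\theta})$ evaluations and the final phase estimation aggregate to at most the claimed constant via a union bound.
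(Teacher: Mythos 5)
Your outline follows the paper's own proof almost step for step: the same split into a variational phase plus an additive $O(1/\epsilon)$ phase-estimation cost, the same per-evaluation sampling analysis in which the energy estimator is controlled by the lower bound $|\langle e^{-i\hat H t}\rangle|^2\ge\sum_i|\alpha_i|^4$ (your single inequality via positivity of the cosines $\cos((\lambda_i-\lambda_j)t)$ is equivalent to the paper's separate bounds $z^2\ge\sum_j|\alpha_j|^4\cos^2(\lambda_j t)$ and $y^2\ge\sum_j|\alpha_j|^4\sin^2(\lambda_j t)$), the same $O(T^2/\sigma^2)$ cost for the purity term, and the same finite-difference error budget $\kappa(h\Lambda)^p+O(\sigma/h)$ balanced over the grid spacing. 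Your per-component allotment $\delta/\sqrt{{\rm dim}(\vec\theta)}$ is sharper than the paper's $\delta/{\rm dim}(\vec\theta)$ but still implies the stated bound, so that difference is harmless.

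The one genuine gap is exactly the step you yourself flag as the crux and then do not carry out: the grid-spacing optimisation that is supposed to "yield" the exponent $(2p+4)/(p+1)$. If you actually balance the two error contributions under the budget $\delta_c$, you get $h\in\Theta\bigl((\delta_c/\kappa)^{1/p}/\Lambda\bigr)$ and hence $\sigma\in\Theta\bigl(\delta_c^{1+1/p}\kappa^{-1/p}\Lambda^{-1}\bigr)$, so the sampling cost per evaluation scales as $1/\sigma^2\in\Theta\bigl(\kappa^{2/p}\Lambda^2\,\delta_c^{-(2+2/p)}\bigr)$. The exponent on $1/\delta_c$ is therefore $(2p+2)/p$, not $(2p+4)/(p+1)=2+2/(p+1)$ as in the theorem statement; for $p=1$ these are $4$ versus $3$. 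Since the derived exponent is the larger one, writing the computation out honestly gives a weaker upper bound than the one claimed, so your proof as sketched does not close. You should not feel singled out: the paper's own proof hides the same issue in the ambiguously typeset intermediate step $\delta_1\in\Theta\bigl(\delta_0^{1+1/p+1}/(\kappa^{1/p}\Lambda)\bigr)$, whose correct form is $\delta_1\in\Theta\bigl(\delta_0^{1+1/p}/(\kappa^{1/p}\Lambda)\bigr)$. To submit this as a complete proof you must either perform the optimisation explicitly and state the bound with exponent $(2p+2)/p$ (noting the discrepancy with the theorem as written), or identify an additional idea --- not present in your sketch or in the paper --- that recovers the smaller exponent. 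A secondary, minor omission is that you assert $O(1/\delta_3^2)$-type sampling throughout but never aggregate the failure probabilities of the $N_{\rm iter}\,{\rm dim}(\vec\theta)$ estimations into the overall success probability $1/2$; a median-of-means or union-bound argument with logarithmic overhead would be needed to make that rigorous, though it does not affect the stated $O(\cdot)$.
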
 

 It is then clear from the analyses contained in theorem \ref{thm:1} and \ref{thm:2} that the particle swarm method has the potential to outperform the gradient based method in cases where many parameters are required to describe the ansatz state and $\Gamma$ is modest.  However, the rate at which the two learn can differ substantially, because the same number of iterations may provide more or less information than the other case. 
 In practice, gradient based methods may be more practical to find an optimal solution in the vicinity of local optima, whereas global methods like our particle swarm method may provide a better method for approaching them. Because the scaling of the Bayesian optimisation approach with the number of variational parameters is better than the bounds that we prove for gradient-based optimisation, we assume that such approaches will be better in high-dimensional problems. Moreover, being inspired by ideas from approximate Bayesian inference, the latter retains part of their noise robustness. For these reasons, we focused on the particle swarm method for experiment and simulations in this work. 

 Finally, while both methods scale quadratically with $T$, in practice the scaling will not typically be so bad.  If $\delta$ is chosen to guarantee fixed relative error for the process then the cost approaches $T^2/\delta^2$, which is constant. This means that the quadratic scaling of $T$ is not necessarily problematic in cases where the WAVES algorithm is optimising for purity.

\pseudosection{2. Excitation operators for chemical Hamiltonians.}
Our method for locating excited states variationally utilizes approximate excitation operators to enhance the rate at which excited states may be located.  Quantum chemistry has a long history of utilising the theory of linear response to external perturbations to approximate excited states of the system~\cite{watermann2014linear}.  The accuracy of this approximation relies upon the partitioning of the total Hamiltonian into $\hat H = \hat H_0 + \hat V$ where $\hat H_0$ is a non-interacting Hamiltonian of the form $\hat H_0=\sum_{ij} h^{ij} b_i^\dagger b_j$ and $\hat V$ is an interacting perturbation.  In quantum chemistry, this partitioning is often taken to be $\hat H_0 = \hat F$
where $F$ is the Fock operator that includes one-body and averaged two-body interactions, and $V$ is the remainder.  For many systems $\hat V$ is small enough such that a perturbation treatment suffices~\cite{MollerPlesset:1934}.

As a non-interacting Hamiltonian, $\hat H_0$ may be efficiently diagonalized by a unitary transformation such that $\hat H_0 = \sum_i \epsilon_i a_i^\dagger a_i$, where $\epsilon_i$ are the eigenvalues of the free-fermion Hamiltonian.  In this model, excited states may be formed through excitation operators of the form $a_i^\dagger a_j$ acting on the ground state where $j$ indexes sites currently occupied by electrons, and $i$ indexes unoccupied sites.  If $\hat V$ is comparatively small, these eigenstates will approximate eigenstates of the true Hamiltonian, and one may refine estimates within the single particle approximation space by diagonalising the Hamiltonian in the basis of vectors $\{a_i^\dagger a_j \ket{\Psi}\}$ where $\ket{\Psi}$ is some reference state.  This method is called the configuration interaction singles (CIS) method.  The connection may also be seen in the context of the first order time-dependent response to an external field.  For quantum computers, variations of these states may be prepared by the unitary operators $\hat E_{ij}(\theta) = \exp \left[ \theta (a_i^\dagger a_j - a_j^\dagger a_i) \right]$. 

In the case of weak interactions, classical methods such as coupled cluster have been successful in describing the ground state, however even low-lying excited states in these systems may exhibit correlation structures and entanglement that prevent their efficient description.  This is reflected in their difficulty of simulation by current classical methods~\cite{serrano2005quantum,Crawford_2000}, and represents a key motivation for quantum methods such as WAVES to study excited states.  Moreover, we stress that the single excitations here represent initial guesses for WAVES to search through correlated states not accessible to classical simulation, and that these single excitations may be derived from a reduced density matrix, using a procedure described below, that was not accessible classically due to quantum correlations in the ground state.  The WAVES method refines these initial guesses through optimisation, and then projects to a dominant eigenstate by utilising phase estimation.

In quantum computing, one hopes to go beyond states that are well-approximated by mean-field solutions through preparation of states with non-trivial entanglement.  In the VQE approach, these states are defined by the parametrisation of the ansatz, however, unlike in classical approaches, we may not have efficient access to full knowledge of the wave-function we are preparing.  In such cases, a possible approach to generating excitation operators is to look for the ``closest'' one-body system.  This problem defines the so-called natural orbitals in quantum chemistry~\cite{LowdinNaturalOrbitals:1956}, which are the orbitals that diagonalize the 1-electron reduced density matrix (1-RDM) of the prepared state, given by
\begin{align}
D_{ij} = \bra{\Psi}a_i^\dagger a_j \ket{\Psi}
\end{align}
that may be efficiently measured on any prepared quantum state, including those with entanglement.  As a symmetric positive semi-definite matrix, it may be diagonalized to yield a set of excitation operators $c_i^\dagger c_j$ to approximate the excited states of the interacting system.  Note that in the case of an anti-symmetric product state reference, such as that generated by Hartree-Fock, these orbitals are identical to those discussed above for $\hat H_0= \hat F$, as the canonical Hartree-Fock orbitals diagonalize the 1-RDM of a single anti-symmetric product state.

\pseudosection{3. The single-exciton Hamiltonian: Hamiltonian parameters, mapping and eigenvalues.}

Previous demonstrations of digital quantum simulation have focused almost exclusively on systems of interacting fermions such as electronic structure in molecules or the Fermi-Hubbard spin lattice model. Here we perform numerical simulations for several such cases in SM Section 7, reporting performances of WAVES in identifying correctly the eigenstates for the molecules $H_2$ to $H_4$.

However, physically interesting 
Hamiltonians are not restricted to interacting fermions and it is important to extend quantum simulation methodologies to general systems of interacting quantum particles and quasi-particles so that quantum simulation can impact on a broad range of problems relevant to physics, chemistry, biology and materials science. The spectrum of a $2 \times 2$ bosonic Hamiltonian was adopted for the experimental demonstration of WAVES in the main paper.
We therefore required a method to convert the Bosonic Hamiltonian $e^{-i\hat{H}t}$ into a sequence of unitary operations that can be implemented on a quantum computer.  This is significant because there is not a simple analogue of the Jordan Wigner transformation that maps Bosonic occupation numbers to qubits.  For example, if $\hat{H}$ had a concise Pauli--decomposition then Trotter--Suzuki formulas can be used to write $e^{-i\hat{H}t}\approx e^{-i \hat P_1 t} e^{-i \hat P_2 t} \cdots $ for Pauli operators $\hat P_1,\hat P_2,\ldots$.  
General purpose simulation methods can be employed to express $\hat{H}$ as a sum of (at most) $O(N^6)$ one--sparse matrices, provided that $\hat{H}$ does not contain interactions higher than two-body~\cite{Lanyon2010,Jarrod2016}.  
Such methods however are ill-suited for present day experiments as they require a coherent implementation of a graph colouring method, which requires additional qubits.

Notwithstanding this open challenge, we select to demonstrate our WAVES approach on the exciton transfer between two Chlorophyll units found in the light harvesting complexes of purple bacteria. In the basis of localised excitons on each Chlorophyll unit, the exciton transfer Hamiltonian is
\begin{equation}
\hat{H} = 
 \begin{pmatrix}
\alpha & \beta \\
\beta & \alpha
\end{pmatrix}\end{equation}
where $\alpha=1.46$~eV is the energy of the exciton on one of the Chlorophyll units and $\beta=0.037$~eV is the interaction between the excitons arising from the transition dipole between the two units. The qubit representation of this two-state Hamiltonian is obtained using compact mapping~\cite{Jarrod2016} and is
\begin{equation}
\hat{H}_{\textrm{qubit}} = \alpha \mathcal{\hat I} + \beta \hat \sigma^x
\end{equation}
where $\mathcal{\hat I}$ and $\hat \sigma^x$ are the usual Pauli matrices in the computational basis.

The WAVES approach performs, sequentially, a witness assisted variational search to find the eigenstates and a QPEA to obtain an accurate energy. At this point we recall a well known property of eigenfunction equations: the Hamiltonian $\hat{H}' = \hat{H} - \ell \mathcal{\hat I}$ has the same eigenstates as $\hat{H}$ and has eigenvalues $\lambda' = \lambda - \ell$, where $\lambda$ are the eigenvalues of $\hat{H}$ and $\ell$ is a constant. The parameter $\ell$ simply redefines arbitrarily the energy zero and we are free to exploit this mathematical equivalence to improve the performance of our algorithm.

In many quantum simulation applications the natural choice of energy zero results in Hamiltonians where the total energy is orders of magnitude larger than the energy differences relevant to the phenomena under investigation. This is particularly true, for example, for reaction energies in quantum chemistry and is also the case for our excitonic Hamiltonian where we are interested in the difference between the ground and excited state $2 \beta \ll \alpha$. Because QPEA requires a bitwise readout of the eigenvalue of each state of interest, any shift $\ell$ that reduces the magnitude of the corresponding energy increases the precision that can be obtained with a given length in the QPEA binary expansion.
In practice, a reasonable choice for $\ell$ may be obtained, for example, from a mean-field calculation, which can be performed efficiently on a classical computer. In other words, such an algorithm directly estimates the \emph{correlation energy} rather than the groundstate energy.  In order to mimic a realistic problem where mean-field theory provides a rather poor guess for the exact eigenvalues, we select an arbitrary value of $\ell \simeq 1.24$~eV in the experiment. 

The energy estimation in WAVES adopts the form: $\mathcal{E}=-\textit{Arg}[\langle \Psi|e^{-i \hat{H} t}|\Psi\rangle_T]/t$, and this imposes restrictions on the value of $t$, to avoid issues due to the $2 \pi$ periodicity of the $\textit{Arg}$ function. This is a limitation already known from QPEA, normally addressed by choosing $t$ small enough to prevent the algorithm from providing any eigenvalues $\mod~2\pi$ \cite{Lanyon2010}.\\
However, in the WAVES protocol additional boundaries for $t$ emerge from considerations about the $\mathcal{P}$ estimator, as described in SM 1.2. 
The span in Purity within the accessible Hilbert space also dominated the choice of the evolution time $t=26$.
It is also easy to verify that $26(\lambda_g-\lambda_e) \ne 0~\mod~2\pi$, therefore our choice satisfies all the conditions stated for $t$, concerning the value of $\mathcal{P}$ in the objective function.

\pseudosection{4. Hydrogen molecules: mapping and ansatz.}
\label{sec:hydrogensys}
In addition to the experimental verification described, we report in SM~7 numerical simulations of chemical Hamiltonians using classical computers, partially reported in Fig.~4 of the article.  In particular we have simulated ground and excited electronic states of H$_2$, H$_3^{+}$, and H$_4$ in a STO-3G basis~\cite{Hehre:1969} in the Jordan-Wigner representation~\cite{Jarrod2016} to assess the scalability of our proposal.  These represent 4-, 6-, and 8-qubit Hamiltonians respectively.  In our investigations we looked at two different ansatz.  First, we utilized a parametrized Hamiltonian (PH) ansatz, where we took
\begin{align}
\hat {U} (\vec{t}) = \exp \left[ i \left( \sum_{ij} t_{ij} (a_i^\dagger a_j) + \sum_{ijkl} t_{ijkl} (a_i^\dagger a_j^\dagger a_k a_l) \right) \right]
\label{eq:HAnsatz}
\end{align}
and allowed variation of the terms $t_{ij}$ and $t_{ijkl}$ to define the ansatz.  The $a_i^\dagger$ and $a_j$ represent creation and annihilation operators in the Hartree-Fock basis.  Variation was performed after transformation to Pauli operators via the Jordan-Wigner transformation.  In all cases, the reference state on which $U(\vec{t})$ acts is taken to be the Hartree-Fock state with the correct number of particles.  This is essentially a deformation of the original Hamiltonian, allowing one to preserve its symmetries and giving a natural connection to the original interaction structure of the problem.  We also utilized an ansatz of the form
\begin{align}
\hat{U} (\vec{t}) = \exp \left[ \sum_{ij} t_{ij} (a_i^\dagger a_j - a_j^\dagger a_i) + \sum_{ijkl} t_{ijkl} (a_i^\dagger a_j a_k^\dagger a_l - a_l^\dagger a_k a_j^\dagger a_i) \right]
\label{eq:Uansatz}
\end{align}
that has the form of an unrestricted unitary coupled cluster ansatz.  The key difference between this ansatz and the previous one, is that excitations between arbitrary orbitals are allowed, not just those found in the Hamiltonian.  The consequence of this, is that one may create or repair symmetry broken states that have been produced by some other means, allowing additional flexibility in the description of the state at the cost of more parameters.  In the following, we will refer synthetically to a parametrisation of the ansatz $\hat{A} (\vec{\theta})$, corresponding to:
\begin{align}
\label{eq:Uansatz2}
\hat {A} (\vec{\theta}) = \exp \Big[ 
\Big(\sum_i \theta_i \hat{A}_i \Big) t.
\Big]
\end{align}
Similarly, the approximate excitation operators used are defined in this basis as
\begin{align}
E_{ij} = \exp \left[ \frac{\pi}{2} (a_i^\dagger a_j - a_j^\dagger a_i) \right]
\end{align}
where we take $j$ to index the occupied orbitals of the Hartree-Fock reference and $i$ to index the occupied orbitals of the reference.

\section*{Acknowledgements}
We thank  G. Marshall, C. Sparrow, A. Montanaro, A. Laing, E. Johnston and J. Barreto for useful discussion and feedback. We thank A. Murray and M. Loutit for experimental support. 
We thank K. Ohira, N. Suzuki, H. Yoshida, N. Iizuka and M. Ezaki for the device fabrication. 
We thank S. Miki, T. Yamashita, M. Fujiwara, M. Sasaki, H. Terai, M. G. Tanner, C. M. Natarajan and R. H. Hadfield for the SPSPDs used for part of the characterisation of the device.
This work was supported by the UK Engineering and Physical Sciences Research Council (EPSRC; grant nos. K033085/1, J017175/1, and K02193/1). 
We acknowledge support from the European Research Council (grant nos. 648667, 608062, 641039, and
640079). 
J.R.M. was supported by the Luis W. Alvarez fellowship in computing sciences and by the Laboratory Directed Research and Development funding from Berkeley Laboratory
provided by the Director of the Office of Science of the U.S. Department of Energy under contract no. DE-AC02-05CH11231. 
S.M.--S. was supported by the Bristol Quantum Engineering Centre for Doctoral Training, EPSRC grant EP/L015730/1. 
X.Z. acknowledges support from the National Key Research and Development Program (grant nos. 2016YFA0301700 and 2017YFA0305200), the National Young 1000 Talents Plan, and the Natural Science Foundation of Guangdong (2016A030312012). 
J.L.O. acknowledges a Royal Society Wolfson Merit Award and a Royal Academy of Engineering Chair in Emerging Technologies. 
P.J.S. was supported by the Army Research Office grant no. W911NF-14-013. D.P.T. thanks the Royal Society for a University Research Fellowship (UF130574).

\section*{Author contributions}
R.S., J.W. and A.A.G. contributed equally to this work. 
R.S., J.W.,  A.A.G., S.P., N.W., J.R.M. and X.Z. developed the algorithm. 
R.S., J.W., D.B., X.Z. and M.G.T. designed the experiment. 
R.S., A.A.G., N.W. and J.R.M. performed simulations.
R.S., J.W., A.A.G., S.P., P.J.S. and J.W.S. performed the experiment, with theoretical support from N.W., J.R.M, S.M.S. and D.P.T.  
N.W. developed the theorems and their proofs. 
R.S., J.W., A.A.G., S.P., N.W. and J.R.M. wrote the manuscript with feedback from all authors.  J.L. O'B. and M.G.T. supervised the project.


\renewcommand{\figurename}{Supplementary Fig.}
\renewcommand{\tablename}{Supplementary Table}

\normalsize
\onecolumn
\appendix

\def\mytitle{Supplementary Materials}
\author{}
\maketitle
\vspace{-20mm}

\renewcommand{\thesection}{\arabic{section}}  

\section{Notes about the objective function}

In the paper, we introduced the objective function
\begin{equation}
\mathcal{F}_{\text{obj}}= \mathcal{E} + T S \propto b \mathcal{E}- a P,
\label{eq:fobj}
\end{equation} 
whose optimization leads the search for the eigenstates in the WAVES method. For all ground-state searches, the free parameter $T$ was fixed with an a-priori numerically optimized value (i.e. $T = 1.25$ in the $1$ qbit experiment and $T = 1$ for all the simulations with $4-8$ qubits). We preferred this option, to ease the interpretation of $\mathcal{F}_{\text{obj}}$ in terms of the energy and purity witnesses. 

 We acknowledge that accurate a-prioristic optimization of $T$ will not be possible in more complex cases. However, for such cases $T$ may be left as an additional parameter to be learned during the optimization.
This may be obtained running an adaptive version of the algorithm that was found optimal in noiseless simulations of the algorithm, whereas introducing a moderate amount of noise, adaptive runs tend to perform worse than optimal fixed choices for $T$ within the same number of steps. 

In the following paragraph, we consider and discuss in further detail the implications of the objective function introduced for WAVES as in SM Eq.\ref{eq:fobj}, showing that it has the correct properties for an eigenstate witness and justifying its adoption as an objective function. Indeed, it is not only important that $\mathcal{F}_{\text{obj}}$ achieves a local minimum at an eigenstate, but also that it behaves well in an $\epsilon$--neighbourhood about such optimal points.

\subsection{Justification of Purity in the objective function}
\label{sec:purityjust}

A key feature of our approach is to minimize an analogue of the free energy of the system in order to find parameters that make it as close as possible to an eigenstate prior to the phase estimation step.  Here we will show that the purity of the reduced density operator of the control qubit in our method can be used to bound the support of the initial target state in the eigenbasis of the Hamiltonian.

\begin{proof}
To begin, let us assume that our variational ansatz can be written as $\ket{\psi}=\sum_j \alpha_j \ket{\lambda_j}$ such that $\hat H\ket{\lambda_j}=\lambda_j\ket{\lambda_j}$ for $\hat H\in \mathbb{C}^{2^n \times 2^n}$.  Then we have that prior to the tomography step, the state can be written as
\begin{equation}
\frac{\ket{0}\ket{\psi}+\ket{1}e^{-i \hat H t}\ket{\psi}}{\sqrt{2}}.
\end{equation}
Then performing a partial trace over the second qubit register gives
\begin{equation}
\rho=
\begin{pmatrix}
\frac{1}{2} & \frac{1}{2}\left(\sum_j |\alpha_j|^2 e^{i\lambda_j t}\right) \\
 \frac{1}{2}\left(\sum_j |\alpha_j|^2 e^{-i\lambda_j t}\right) &\frac{1}{2}
\end{pmatrix}
\label{eq:TraceTarget}
\end{equation}
Matrix multiplication then gives the following expression for the purity of the control qubit

\begin{equation}
{\rm Tr}(\rho^2) = \frac{1}{2}\left(\sum_j |\alpha_j|^4 \right) + \frac{1}{2}\left(1 + \sum_{j\ne k}|\alpha_j|^2|\alpha_k|^2 \cos([\lambda_j-\lambda_k]t) \right)\ge \sum_j |\alpha_j|^4.
\label{eq:IPR}
\end{equation}

This expression is $1$ if and only if $\alpha_{j}=0$ for all $j$ such that there exists $k$ obeying $(\lambda_j-\lambda_k)t \ne 0~\mod~2\pi$.  This will generically occur for almost all $t>0$.  Thus for almost all $t$, obtaining a purity of $1$ implies that $\ket{\psi}$ is an eigenstate of the Hamiltonian.  Thus maximizing the purity is a well justified goal for our protocol.
\end{proof}

An even clearer result emerges if the evolution time is assumed to be short, i.e. $\max_{j,k} |\lambda_j-\lambda_k| t \le \pi/2$, then Jensen's inequality gives

\begin{equation}
{\rm Tr}(\rho^2)\le \frac{\left( 1 + \cos\big(\sum_{j,k} |\alpha_j|^2|\alpha_k|^2 |\lambda_j - \lambda_k|t \big) \right)}{2},
\end{equation}
the right hand side of which is a function that decreases monotonically with the average eigenvalue gap.  Similarly, Eq.~\eqref{eq:IPR} shows that the purity is lower bounded by the \textit{participation ratio}, the reciprocal of which is a frequently used estimate of the support of a state.  This shows that, under the assumptions of relatively short evolution time, the purity is upper and lower bounded by sensible measures of the support of a state, thus strengthening its adoption as a sensible measure.
One can also push beyond the limitations posed by short evolutions by replacing the maximization of the purity at a fixed time with  a maximization of the infinite time average of the purity.
\begin{equation}
\lim_{T\rightarrow \infty} \frac{1}{T}\int_0^T{\rm Tr}(\rho^2)\mathrm{d}t = \frac{1}{2}\left(1+\sum_j |\alpha_j|^4 \right).
\label{eq:integralpur}
\end{equation}
We see from this that maximizing the average of the purity over sufficiently long evolution times is equivalent to maximizing the participation ratio, the reciprocal of which gives the support of the eigenstate in the eigenbasis of the Hamiltonian.  Thus a tight connection can be made between the time averaged purity and the support of the ansatz state in the WAVES algorithm.

\subsection{Choice of appropriate \textit{t} for the experimental implementation}
\label{sec:appr_t}

We are interested in choosing an evolution time $t$, such that we obtain a lower bound far from $1$ in Supp.~Eq.~\ref{eq:IPR}, for the purity $\mathcal{P}$ spanned by all possible trial states. This is important for WAVES protocol, considering that eigenstate searches are (jointly) driven by $\mathcal{P}$. The global range of purity values will thus depend upon $\Lambda_M t= \max_{j,k} |\lambda_j-\lambda_k| t$, whereas the minimum purity observed when searching the Hilbert space between two generic eigenstates will depend upon their energy difference $\Lambda_M t = \min_{j,k} |\lambda_j-\lambda_k| t$, i.e. on the degeneracy of the eigenspectrum.\\ 
If purity were to span a small range of values in the Hilbert subspace searched, a higher accuracy in the measurements would be required in turn, in order to make eigenstate witnesses sufficiently robust against experimental noise. This could be achieved, for example, via an increase in the number of measurements to obtain each datapoint.
Therefore, $t$ shall be chosen big enough to keep the number of measurements as small as possible. At the same time, it is important to satisfy the requirement to invoke Jensen's inequality for $\mathcal{P}$ to be a sensible measure of the eigenstates' support.\\ 
Experimentally, therefore, a meaningful choice is to increase $t$ to make it as close as possible to the aforementioned upper bound $\Lambda_M t \leq \pi/2$, in order to reduce the accuracy required for $\mathcal{P}$ measurements. This can be achieved using e.g. mean field estimates of the eigenvalues $\lambda_j$ we are interested in. If for such choice $\Lambda_M t \ll \pi/2$, then either a corresponding increase in the number of measurements, or a different witness as in Supp.~Eq.~\ref{eq:integralpur} may be necessary.

\subsection{Von Neumann entropy}

After having discussed the scaling of the purity, we will now focus on showing that, for small $t$, the von Neumann entropy of the control qubit is also a well motivated measure of the support of the state.  From the fact that the linear entropy is a lower bound on the von Neumman entropy we have that if $\max_{j,k} |\lambda_j-\lambda_k| t \le \pi/2$ then

\begin{equation}
{\rm Tr}(-\rho\log(\rho)) \ge 1-{\rm Tr}(\rho^2) \ge \frac{\left( 1 - \cos\big(\sum_{j,k} |\alpha_j|^2|\alpha_k|^2 |\lambda_j - \lambda_k|t \big) \right)}{2}.\label{eq:supp7}
\end{equation}
Thus the von Neumann entropy is bounded below by a monotonically increasing function in the average eigenvalue gap under our assumptions on $t$.  
We similarly see from~\eqref{eq:supp7} and~\eqref{eq:integralpur} that the lower bound holds for almost all $t$.

We can show a similar upper bound through the use of Jensen's inequality 
\begin{equation}
{\rm Tr}(-\rho\log(\rho)) \le -\log\left(\frac{1}{2}+\frac{1}{2}\left(\sum_{j,k}|\alpha_j|^2 |\alpha_k|^2 \cos([\lambda_j-\lambda_k]t) \right) \right).
\end{equation}
If we now assume that $\max_{j,k}|\lambda_j-\lambda_k|t < 2$ then we see from the bound $\cos(x) \ge 1-x^2/2$ and the monotonicity of $\log(x)$ that
\begin{equation}
{\rm Tr}(-\rho\log(\rho)) \le -\log\left(1-\frac{1}{4}\left(\sum_{j,k}|\alpha_j|^2 |\alpha_k|^2 [\lambda_j-\lambda_k]^2t^2 \right) \right).
\end{equation}
This shows that the von Neumann entropy is also bounded above by a monotonically increasing function of the expected square of the gap, therefore it is a well motivated measure of the support of the state in the eigenbasis of $\hat H$.  \\

 \section{Swarm Optimization Algorithm}
\label{sec:swarm_algorithm}

We discuss here the optimization algorithm employed for the minimization of $\mathcal{F}_{\text{obj}}$ in our experiment, which is a gradient--free particle swarm optimization algorithm. Other optimization methods may be adopted for this subroutine of the protocol, however this swarm approach is inspired by ideas from approximate Bayesian inference, thus retaining part of their noise robustness.
The idea is to have a swarm of trial states (particles) randomly sampled from a prior distribution.
$\mathcal{F}_{\text{obj}}$ is measured for each particle, and the result is used to update the probability distribution. 
A new swarm is drawn from the updated distribution to perform further steps, until a convergence criterion is triggered. The last inferred posterior provides the estimate for the optimal state and the corresponding objective function.

We present here the details of the procedure, whose pseudo-code is in Algorithm~\ref{pseudocode}.
In the case of a ground-state search, we start sampling a swarm $\Xi$ composed of $N$ particles (i.e. a set of $N$ parameters values $\{ \vec{\theta}_i\}$) from the parameter space according to a uniform distribution (we assume no prior knowledge of the ground-state). For each of the particles, the quantum device can efficiently evaluate the two quantities $\mathcal{P}$ and $\mathcal{E}$, thus the objective function $\mathcal{F}_{\text{obj}}$ can be easily computed. In the pseudo-code, to emphasize operations performed within the quantum device, we prepend a "(Q)" to any step where such operations are invoked. Once the objective function has been obtained for all the particles, the algorithm rejects some of them ($N-S$ in our example), on the basis of the corresponding $\mathcal{F}_{\text{obj}}$, reducing the set to $\Xi'$. This subset is used to compute classically the weighted mean $\mu_{\theta'}$ and standard deviation $\sigma_{\theta'}$ of a Gaussian distribution on the parameters space.
The weights of the particles are here directly proportional to their measured $\mathcal{F}_{\text{obj}} (\vec{\theta}_i)$. 
At each step we re-sample the swarm: $N-S$ new particles are drawn from the (last updated) Gaussian distribution $\mathcal{N}(\mu_{\theta'},\sigma_{\theta'})$, and replace the ones filtered out in the new input set for the following step.   
In practice we take $S=\lceil \sqrt{N}\rceil$, but acknowledge that for more challenging examples $S\in \mathcal{O}(N)$ will likely be required.

\begin{algorithm}[ht]
\caption{Swarm optimization algorithm}
\label{pseudocode}
	\KwData{Random set of $N$ particles, characterized by the set of parameters $\Xi:=\{\vec{\theta_i}\}$, sampled uniformly from the parameters' space }
	initialization of parameters: $a, b$, \textit{threshold(s)}, $S$\;
	\While{convergence \textbf{not} achieved} {
		\ForEach{$\vec{\theta_i} \in \Xi$}{
			(Q) evaluate control qubit \textit{purity} ($\mathcal{P}$)\;
			(Q) evaluate \textit{energy estimator} ($\mathcal{E}$)\;
			calculate $\mathcal{F}_{\text{obj}} (\theta')= -a \mathcal{P} + b \mathcal{E}$\;
			}
		calculate $\mu_{Fobj}'=\langle \mathcal{F}_{\text{obj}}(\vec{\theta_i}) \rangle_{\Xi}$ \;
		\eIf{$|\mu_{Fobj}'-\mu_{Fobj}| <$ threshold for objective function}{
		convergence achieved (via $\mathcal{F}_{\text{obj}}$) \;
		}{
		$\mu_{Fobj} \gets \mu_{Fobj}'$\;
		}
		find the subset $\Xi':=\{ \vec{\theta_i'} \} \subset \{ \vec{\theta_i} \}$ of the $S$ particles with lowest $\mathcal{F}_{\text{obj}}$\;
		\If{$\mathcal{F}_{\text{obj}}$ is adaptive}{
			adjust $a \propto |\langle \mathcal{P} \rangle_{\Xi'} - \langle \mathcal{P} \rangle_{\Xi}|$ 
			and $b \propto |\langle \mathcal{E} \rangle_{\Xi'} - \langle \mathcal{E} \rangle_{\Xi}|$ \;
			normalize $\{a,b\}$ \;
		}
		calculate (weighted) mean $\mu_{\theta'}:= \langle \vec{\theta_i'} \rangle_{\Xi'}$ for the $\Theta'$ subset parameters\;
		calculate (weighted) variance $ \sigma^2_{\theta'}:= \langle (\vec{\theta_i'})^2 \rangle_{\Xi'} - \langle \vec{\theta_i'} \rangle^2_{\Xi'}$ for the $\Xi'$ subset parameters\;
		\If{$ \sigma_{\theta'} <$ threshold for particle dispersion}{
		convergence achieved (via $ \sigma_{\theta'}$) \;
		}
		generate a set $\Omega$ of $N- S$ new particles, Gaussian distributed around $\mu_{\Xi'}$ with standard deviation $\hat \sigma_{\Xi'}$\; 
		$\Xi \gets \Xi' \cup \Omega$ \;
	}
\KwResult{$\mu_{\theta'}$ is the output best estimate for the optimal parameters, and $\sigma_{\theta'}$ the error estimate. }
\end{algorithm}

In our implementation, convergence is achieved and the algorithm stopped either when the Gaussian distribution is sufficiently peaked (i.e. the spread of the distribution is lower than an accuracy determined a-priori to be satisfactory for identifying the ground state), either when fluctuations in $\mathcal{F}_{\text{obj}}$ along the last steps are smaller than a pre-determined threshold (indicating that the algorithm is drawing new particles in a plateau of the objective function, and therefore a restart is recommended). \\
The swarm optimization in the case of excited-state searches is fully equivalent, the only differences being: i) a choice of $a \gg b$ (or equivalently $T \gg 1$), ii) the initial distribution of the swarm $\Xi$ assumed as a Gaussian distribution, centred around the initial guess $\ket{\Psi_0} = \hat{E}_{p_i} \hat{A}_g \ket{\Phi}$, as provided by the appropriate excitation operator. 
Several modifications in the \textit{swarm optimization} algorithm were investigated.
One possible improvement is to adaptively choose at each algorithm step the parameters $\{ a, b \}$ concurring to the definition of $\mathcal{F}_{\text{obj}}=b \mathcal{E}- a \mathcal{P}$. This can be useful for non-convex optimizations or when poor a-priori information is available about values spanned by $\mathcal{E}$ and $\mathcal{P}$. 
A different strategy is to enhance the algorithm rapidity to converge, introducing some "greediness". This amounts to replace $\mu_{\theta'}$ with the parameters of the particle exhibiting the best $\mathcal{F}_{\text{obj}}$, $\vec{\theta'}_{best}$.

\subsection{Optimizing the number of particles for the experimental study case}
\label{sec:numeric_perfom}

Preliminary simulations were performed in order to test the algorithm behaviour for the study-case experimentally addressed in the paper. 
In particular, it was observed how an initial set of as few as $N=8,S=2$ particles was capable converging to the target state with average fidelity $F_{|-\rangle} > 0.997$, also when including a moderate level of Poissonian noise to the simulated photon counts at the output of the circuit (replacing the exact photon coincidences at the output ports of the simulated circuit, with Poissonian-distributed coincidences having a mean coincident with the numerically computed value, and a variance that assumes pessimistically a peak of $\sim 200$ experimental coincidence counts ).

$N=8$ was thus chosen as a reasonable trade-off between the final fidelity obtained and the total number of tomographies required, bearing in mind that in a quantum simulation framework, an overlap of $0.9-0.99$ with the ground state is already considered a very good starting point for a fine-grain energy estimation using IPEA~\cite{AspuruGuzik2005}.\\

\section{Complexity analysis of the variational protocol}
\label{sec:complexity}
In the article we discuss some of the main aspects concerning the computational cost of WAVES. In particular, in Theorem 1
from the Results section and Theorem 2
reported in the Methods we evaluated the computational costs neglecting the contribution due to the unitary decomposition. In the following, we report the proofs of those theorems and we further expand the discussion on the efficiency and scalability of this approach. In SM~\ref{sec:VQEscal} we also highlight some fundamental differences between WAVES and traditional VQE and we present Supp.~ Theorem~\ref{supptheor:1}, which shows how a fault tolerant quantum computer can give a quadratic improvement in such scaling.

\subsection{Complexity of WAVES for gradient-based optimization methods}
\label{sec:proofth2}
We will here delve into the proof of Theorem 2 in Methods, which states the complexity of gradient based simulation.  \\
A major caveat must be born in mind for any such proof.  
If we wish to fully bound the complexity of any variational algorithm, we need to know how many steps it will take to converge to an acceptable variational ansatz.  In general, no such bound exists.  So while the following bounds on the complexity will invariably contain terms for the number of steps involved in the variational optimization, it is very difficult a-priori to estimate such parameter, which depends upon the quality of the initial guess, as well as the ansatz used.  In practice, a small number of variational steps was sufficient for small variational quantum eigensolvers (for example fewer than $60$ are required in~\cite{Peruzzo2014}), and we found that less than $70$ steps were enough for all the simulations performed in SM~\ref{sec:numsimH} up to 8 qubits.

\begin{proofof}{Theorem 2}
There are two phases of the algorithm.  The first step involves variational optimization of the trial state $\ket{\psi_T}$ and the second involves using phase estimation to learn the energy accurately.  The cost of the second phase is trivially $O(1/\epsilon)$ with $\epsilon$ the targeted accuracy, so we only need to focus on the former cost in this proof.

Using this objective function, we have two intermediate errors that contribute to $\delta$.  Let us define $\delta_1$ to be the error in the gradient that arises from inexact gradient estimation and let $\delta_2$ be the error that arises from inexact entropy calculations.  Then using $p^{\rm th}$-order finite difference formulas on a uniform mesh of points with side width $h$, the error in a single gradient evaluation is, for positive $\kappa,\Lambda$
\begin{equation}
O([\delta_1+\delta_2]/h + \kappa(\Lambda h)^p).
\end{equation}
The error is asymptotically optimal when both terms are equal to each other which occurs, assuming $p\in \Theta(1)$, for
\begin{equation}
h\in \Theta\left(\frac{\left(\frac{[\delta_1+\delta_2]\Lambda}{\kappa} \right)^{1/p+1}}{\Lambda}\right).\label{eq:hbd}
\end{equation}
If we demand $\delta_1=\delta_2$ and that the total error in the derivative estimate is $\delta_0$ then we have from~\eqref{eq:hbd} that it suffices to choose
\begin{equation}
\delta_1\in \Theta\left(\frac{\delta_0^{1+1/p+1}}{\kappa^{1/p}\Lambda} \right).\label{eq:delta1bd}
\end{equation}

Now, to estimate $T{\rm Tr}(\rho^2)$ within error $\delta_1$, we need to perform $O((T/\delta_1)^{2})$ experiments of the form of Fig.~1.b in the main body. Supp.~Eq.~\eqref{eq:delta1bd} then implies that the total number of controlled $e^{-iHt}$ needed to perform the operation is
\begin{equation}
O\left(\frac{\kappa^{2/p} \Lambda^2T^2}{\delta_0^{2+2/p+1}} \right).\label{eq:Tbd}
\end{equation}

Consider now the number of controlled evolutions required to learn $ \mathcal{E}$ within error at most $\delta_1$.  
In order to estimate this we use that if $\|\hat H\|t < \pi/2$ then
\begin{equation}
 \mathcal{E} = -\tan^{-1}(y/z)/t,
\end{equation}
where $y= {\rm Im} (\bra{\psi_T(k)} e^{-i\hat H t} \ket{\psi_T(k)})$ and ${\rm Re} (\bra{\psi_T(k)} e^{-i\hat H t} \ket{\psi_T(k)})$.
It is then easy to see from differentiation and standard norm inequalities that if the errors in both quantities are at most $\delta_3$ then the total error in $ \mathcal{E} $ is at most
\begin{equation}
\frac{\delta_1(|z|+|y|)}{t(z^2+y^2)} \le \frac{\delta_3}{t\sqrt{z^2 +y^2}}\in O\left(\frac{\delta_3 \|\hat H\|}{\sqrt{z^2 +y^2}} \right).\label{eq:epsbd}
\end{equation}
Now since $\ket{\psi_T(k)} := \sum_j \alpha_j(k) \ket{\lambda_j}$ and since $\pi/2\ge\lambda_j\ge 0$ we have that
\begin{equation}
z^2= (\sum_j |\alpha_j(k)|^2 \cos(\lambda_j t))^2 \ge \min_k\sum_j |\alpha_j(k)|^4\cos^2(\lambda_j t) 
\end{equation}
and
\begin{equation}
y^2= (\sum_j |\alpha_j(k)|^2 \sin(\lambda_j t))^2 \ge \min_k\sum_j |\alpha_j(k)|^4\sin^2(\lambda_j t) 
\end{equation}
These results then imply that under our assumptions
\begin{equation}
z^2 +y^2 \ge \min_k \sum_j |\alpha_j(k)|^4.
\end{equation}
Substituting this into~\eqref{eq:epsbd} gives that the error in the evaluation of $ \mathcal{E}$, $\delta_1$, satisfies
\begin{equation}
\delta_1 \in O\left(\frac{\delta_3 \|\hat H\|}{\sqrt{\min_k \sum_j |\alpha_j(k)|^4}} \right).\label{eq:delta1}
\end{equation}
Therefore \eqref{eq:delta1bd} implies
\begin{equation}
\delta_3 \in \Omega \left(\frac{\sqrt{\min_k \sum_{j}|\alpha_j(k)|^4}\delta_0^{1+1/p+1}}{\|\hat H\|\kappa^{1/p}\Lambda} \right).
\end{equation}
The number of experiments needed to achieve this is $O(1/\delta_3)$, and combining this with the result in~\eqref{eq:Tbd} shows that the number of controlled evolutions needed to learn the derivative within error $\delta$ is at most
\begin{equation}
O\left(\kappa^{2/p}\Lambda^2 \left(\frac{\|\hat H\|^2}{\min_k \sum_j |\alpha_j(k)|^4}+T^2 \right)\left[\frac{1}{\delta} \right]^{2+2/(p+1)}\right).
\end{equation}

In order to compute the gradient, we need to repeat this algorithm ${\rm dim}(\vec \theta)$ times.  Furthermore, in order to ensure that the error is at most $\delta$, with respect to the $2$--norm, in the gradient of the objective function we need to compute each component of the gradient within error at most $\delta/{\rm dim}(\vec \theta)$.  Then since the gradient calculation must be repeated at most $N_{\rm iter}$ times the total cost of computing the gradients is in
\begin{equation}
O\left(N_{\rm iter}\kappa^{2/p}\Lambda^2{\rm dim}{(\vec \theta)} \left(\frac{\|\hat H\|^2}{\min_k \sum_j |\alpha_j(k)|^4}+T^2 \right)\left[\frac{{\rm dim}(\vec \theta)}{\delta} \right]^{2+2/(p+1)}\right).
\label{eq:totcost}
\end{equation}
The final cost estimate then follows by adding the $O(1/\epsilon)$ cost of phase estimation to \ref{eq:totcost}.
\end{proofof}

\subsection{Complexity of WAVES for gradient-free optimization methods}
\label{sec:proofth1}
Now that we have discussed the cost of performing the optimization using a traditional gradient descent method, we here prove the cost of our gradient-free method below, outlined as Theorem~1 in the ``Results'' section of the main manuscript. Also it should be noted that in the following discussion we require that the error is small in the trace of the sample covariance matrix and the one-norm of the vector of means.  In practice, such algorithms can typically tolerate small error in each of the variances and the max-norm of the vector of means. Nevertheless for the formal proof we adopted a more conservative metric, because if the norm is bounded with respect to the trace and the one-norm, then we can exclude pathologic behaviour in the asymptotic limit of many-parameter problems.

\begin{proofof}{Theorem 1}

The proof proceeds similarly to Theorem~2 in SM~\ref{sec:proofth2}. Again, the final phase estimation in WAVES requires $O(1/\epsilon)$ applications of the controlled unitary operation. So it suffices to focus on the variational search cost.
The optimization process requires $O(N N_{iter})$ evaluations of the objective function.  If we wish to learn the objective function within error $\delta$ then the cost of this process is from~\eqref{eq:delta1} and the discussion following~\eqref{eq:delta1bd} at most
\begin{equation}
O\left(N_{\rm iter}N\left(\frac{\|\hat H \|^2}{\min_k \sum_j |\alpha_j(k)|^4}+T^2 \right)\frac{1}{\delta^2} \right)\label{eq:Obd}
\end{equation}

The next step is to constrain $N$ such that the bounds on the variance in the sample mean of the $S$ particles accepted by the algorithm and variance of the sample variance are appropriately bounded.

Owing to the fact that the statistical errors in the sample variance are i.d., we can use the fact that variance is additive.  Thus the variance in each of the ${\rm dim}(\vec \theta)$ components of the particles is 
\begin{equation}
\frac{\sigma^2}{S}\le \frac{x_{\max}^2(k)}{S}.
\end{equation}
Thus if we are to guarantee that the trace of the covariance matrix of the sample mean at most $\epsilon_\mu^2(k)$ it suffices to take
\begin{equation}
S = \Theta\left({\rm dim}(\vec \theta)\max_k\frac{x_{\max}^2(k)}{\epsilon_\mu^2(k)} \right).
\end{equation}
Similarly the variance in the sample variance in one of the components of $\vec \theta$ is (in the asymptotic limit $S\gg 1$)
\begin{equation}
\frac{\mu_4}{S} +\frac{(S-3)\sigma^2}{S(S-1)}\in O\left(\frac{x_{\max}^4(k)}{S}\right).
\end{equation}
where $\mu_4$ is the fourth moment of one of the components.  If we then require that the variance of the trace of the covariance matrix is at most $\epsilon_\Sigma^4(k)$ then it suffices to take
\begin{equation}
S\in \Theta\left({\rm dim}(\vec \theta)\max_k\left(\frac{x_{\rm max}^4(k)}{\epsilon_\Sigma^4(k)} \right)\right).
\end{equation}
It therefore suffices to take $S\in \Theta({\rm dim}(\vec \theta)\Gamma^2)$
Then by assumption we have that $N\in O(S)$ and hence the result:
\begin{equation}
O\left(N_{\rm iter}N{\rm dim}{(\vec \theta)} \left(\frac{\|\hat H \|^2}{\min_k \sum_j |\alpha_j(k)|^4}+T^2 \right)\left[\frac{\Gamma}{\delta} \right]^{2}+\frac{1}{\epsilon}\right)
\end{equation}
for the overall method complexity follows from~\eqref{eq:Obd}.
\end{proofof}

\subsection{Efficiency}

A priori, provided that the states that emerge as part of VQE have substantial support on a small number of eigenstates, the norm of the Hamiltonian is bounded, constant precision is required and the number of variational parameters is polynomially large; WAVES is efficient given that $e^{-iHt}$ is efficiently simulatable.  First, let us focus on a particular ansatz for the state.  Consider unitary coupled cluster expansions which take
\begin{equation}
\ket{\psi_{UCC}} = e^{T-T^\dagger} \ket{\psi_{HF}},
\end{equation}
where $\ket{\psi_{HF}}$ is the Hartree--Fock state and $T=T_1+T_2+\cdots$ is the cluster operator~\cite{Peruzzo2014}.  Here the terms $T_1$ consist of single excitations away from the reference state, $T_2$ consists of double excitations etc.  Let us restrict our interest to $T\approx T_1 +T_2$.  In this case there are, for systems of $n$ spin orbitals, ${\rm dim}(\vec{\theta})\in O(n^4)$ different parameters in the ansatz.  In practice far fewer terms are required in the ansatz than this~\cite{Peruzzo2014,Wecker2015}.  Also, for constant $\|\hat{H}\|$, in practice, the norm scales not with the number of spin orbitals but with the molecular properties of the system~\cite{babbush2015chemical}.

In order to make a statement about the gate efficiency of the above methods we need to further consider a simulation algorithm for $e^{-i \hat{H} t}$.  There are many quantum simulation algorithms that can be used to perform such a simulation efficiently (for constant error tolerance).  The simplest are Trotter--decompositions, which for a system of $n$ spin orbitals require at most $n^{8+o(1)}$~\cite{wiebe2010higher} primitive exponential operations (which can each be implemented using a logarithmic number of $T$ gates in a Clifford +$T$ gate library~\cite{kliuchnikov2012fast}) for a second quantized Hamiltonian.  Empirically, low order Trotter-decompositions are found to require a number of sequential exponentials that scale roughly as $O(n^{5.5})$.  Recent work in truncated Taylor series simulations have demonstrated upper bounds on the cost that scale as $O(n^5)$~\cite{babbush2016exponentially}.  In all such cases, arbitrarily large molecules can be simulated at cost that scales polynomially with the number of spin orbitals in the problem.

From Theorem $1$, the dominant cost of the algorithm comes from the variational steps involved (assuming $\epsilon$ and $\delta$ are fixed). Since $\|\hat H\|$ depends only indirectly on the number of spin orbitals ($n$), as are $N_{\rm iter}$, $\Gamma$ and $N$, we will treat these parameters as independent since they have no explicit dependence on the dimension.  Indeed, most are simply user selected parameters. In general the dependence or not of these parameters from the size of the system needs to be verified and it is one of the main open questions concerning VQE methods. While the quantity $\min_k \sum_j |\alpha_j(k)|^4$ does depend on $n$, this dependence only comes in implicitly through the quality of the initial ansatz.  While many ansatz-based solutions, such as Hartree-Fock, CISD or UCC states, can perform well in practice~\cite{helgaker2014molecular} it is challenging to estimate a priori how it will scale for a particular molecule, so we assume that it is constant in our subsequent discussion.  In other words, we assume that the wave function has strong support over a constant number of eigenvectors.  Under these assumptions the cost of the algorithm is then simply proportional to the cost of the simulation required for the ansatz state and the phase estimation circuit.

More specifically, the cost under the assumption of constant $\epsilon$ and $\delta$ 
the cost is then given from Theorem 1 to be
\begin{equation}
O\left(
N_{\rm iter}N{\rm dim}{(\vec \theta)} \left(\frac{\|\hat H \|^2}{\min_k \sum_j |\alpha_j(k)|^4}+ T^2 \right){\Gamma}^{2}(n^{5.5} + {\rm dim}(\vec \theta)^{2.5})
+\frac{1}{\epsilon}
\right).
\end{equation}
This shows that  
WAVES can be used to provide a scalable method for drawing samples 
that are biased towards a particular eigenstate.  Furthermore, if the eigenstate is in a degenerate (or near-degenerate) manifold of eigenstates, then the algorithm will learn an eigenvalue but the variational parameters may not correspond to an eigenstate when such degeneracies are broken. In this way, our algorithm provides scalable estimates of the energy even if the gap vanishes.  
However, we emphasize that without additional assumptions we cannot prove that any specific class of chemical systems can be efficiently estimated. We discuss such caveats in greater detail below.

\subsection{Scalability of VQE}\label{sec:VQEscal}
A major criticism levied at VQE relative to IPEA based simulation is on its scalability.  To some extent neither VQE, Waves, nor any other groundstate preparation method can be scalable in general.  The reason for this is that if VQE were capable of preparing a close surrogate to an arbitrary groundstate in polynomial time then it would ensure that $\NP \subseteq \BQP$~\cite{cipra2000ising}, which is widely believed to be false under reasonable complexity theoretic assumptions.  However, the remarkable success that post Hartree--Fock methods have shown in estimating groundstate energies of molecules~\cite{helgaker2014molecular} has lead to the conjecture that physically meaningful molecules do not have groundstates that are hard to prepare~\cite{aaronson2009computational}.  Furthermore, such criticisms do not apply only to VQE.  They also apply to IPEA and classical approaches as well.

Although resolving the question of whether VQE can be used to scalably solve all electronic structure problems would constitute a substantial breakthrough in computational complexity theory,
the question of whether the algorithm itself is scalable can be addressed.  That is to say, ``what is the computational complexity of implementing the VQE algorithm without requiring rigorous guarantees of the distance between the estimated energy and the true ground state energy?'' Here we discuss this point and show that the algorithm is in principle scalable, and  show that, through the use of quantum amplitude estimation, VQE can be made much more scalable on a fault tolerant quantum computer than it would appear from prior analyses.

WAVES consists of two phases.  The first involves optimization of the training objective function. These costs are estimated in Theorem 1 and Theorem 2 which reduce to VQE when $T=0$.  Broadly speaking, provided that the state has a significant overlap with the groundstate, constant precision is required for the optimization and the number of variational parameters used scales polynomially with the number of spin-orbitals in the molecule then both methods can be applied for fixed number of iterations.  In this sense the methods are efficient and scalable, although we cannot make promises apriori about whether the resultant estimates will be accurate.

The part of VQE that remains to be estimated is the number of samples needed if iterative phase estimation is not used.  Since the above theorems assume the use of iterative phase estimation the costs cited do not reflect that of traditional VQE.  Avoiding phase estimation does not make VQE inefficient (for constant $\epsilon$) but it can under some circumstances render it impractical. Previous work  by Wecker et al~\cite{Wecker2015} highlights this, wherein they show that VQE requires a number of measurements to estimate the ground state energy that scales as
\begin{equation}
M\in O\left(\frac{\left[\sum_j |h_j|\right]^2}{\epsilon^2} \right),
\end{equation}
given $H=\sum_j h_j H_j$ for $h_j \in \mathbb{R}$ and Hermitian $H_j$ and $\epsilon$ is the desired error target.
This to no small extent justifies using phase estimation as the final step of the algorithm.

However, in principle quadratically better scalings can be achieved using a fault tolerant quantum computer.  We state this below in the following theorem.
\begin{theorem}\label{supptheor:1}
Let $H=\sum_j h_j H_j$ for $h_j\in\mathbb{R}$ and $H_j$ both unitary and Hermitian and let $\ket{\psi}=U\ket{0}$ for unitary $U$.  Then $\bra{\psi}H\ket{\psi}$ can be estimated to within error $\epsilon$ with high probability using $O(\sum_j |h_j| / \epsilon)$ applications of $U$ and $U^\dagger$.
\end{theorem}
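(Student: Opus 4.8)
The plan is to reduce the estimation of $\bra{\psi}H\ket{\psi}$ to a single amplitude-estimation problem, so that the quadratic speed-up of amplitude estimation converts the $O((\sum_j|h_j|)^2/\epsilon^2)$ sampling cost into the claimed $O(\sum_j|h_j|/\epsilon)$. Set $\lambda := \sum_j|h_j|$ and build a linear-combination-of-unitaries (LCU) block encoding of $H/\lambda$: let $\mathrm{PREP}\ket{0}_a = \lambda^{-1/2}\sum_j\sqrt{|h_j|}\,\ket{j}_a$ load the square-rooted coefficients on an ancillary register, and let $\mathrm{SEL}=\sum_j \ketbra{j}{j}_a\otimes(\mathrm{sgn}(h_j)H_j)$ select the (unitary, Hermitian) terms. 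Then $B:=(\mathrm{PREP}^\dagger\otimes I)\,\mathrm{SEL}\,(\mathrm{PREP}\otimes I)$ is a unitary on the joint register with $\bra{0}_a B\ket{0}_a = H/\lambda$, built from $O(1)$ controlled-$H_j$ and coefficient-loading operations and, crucially, \emph{no} applications of $U$.

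Next I would fold in the state preparation $\ket{\psi}=U\ket{0}$. Writing $\ket{\bar0}:=\ket{0}_a\ket{0}$ and $G:=(I_a\otimes U^\dagger)\,B\,(I_a\otimes U)$, which is again unitary, one checks that $\bra{\bar0}G\ket{\bar0} = \bra{\psi}(H/\lambda)\ket{\psi}$. Because $H$ is Hermitian this diagonal matrix element is real and lies in $[-1,1]$, so it equals $\bra{\psi}H\ket{\psi}/\lambda$. I would then expose it as a probability through a Hadamard test: controlling $G$ on an extra qubit prepared in $\ket{+}$ and measuring that qubit in the $X$ basis yields the accepting outcome with probability $p=\tfrac12\bigl(1+\operatorname{Re}\bra{\bar0}G\ket{\bar0}\bigr)=\tfrac12\bigl(1+\bra{\psi}H\ket{\psi}/\lambda\bigr)$. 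Estimating $p$ to additive accuracy $\epsilon/(2\lambda)$ therefore pins down $\bra{\psi}H\ket{\psi}$ to accuracy $\epsilon$.

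Finally I would apply amplitude estimation to the Hadamard-test circuit $A$ (the unitary whose acceptance amplitude is $\sqrt p$). With $M$ rounds of the associated Grover reflection $Q=A S_0 A^\dagger S_\chi$, amplitude estimation returns $\theta$ with $p=\sin^2\theta$ to additive error $O(1/M)$ in $\theta$, hence $p$ to error $O(\sqrt{p(1-p)}/M)=O(1/M)$; choosing $M\in O(\lambda/\epsilon)$ secures the required precision, and a constant number of repetitions with a median boosts the success probability above any fixed constant. Each application of $Q$ invokes $A$ and $A^\dagger$ a constant number of times, each of which uses $G$ and $G^\dagger$ once and hence $U$ and $U^\dagger$ a constant number of times, while the LCU ingredients $\mathrm{PREP}$ and $\mathrm{SEL}$ contribute no $U$-queries. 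The total number of applications of $U,U^\dagger$ is therefore $O(M)=O(\lambda/\epsilon)=O(\sum_j|h_j|/\epsilon)$, as claimed.

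The main subtlety to get right is the handling of the sign of the expectation: $\bra{\psi}H\ket{\psi}$ may be negative, so one cannot estimate a naive ``good-subspace'' probability and recover the signed value from $\sqrt{p}$; routing the estimate through the Hadamard-test probability $p=\tfrac12(1+\langle H\rangle/\lambda)$, which always lies in $[0,1]$, is what makes the signed quantity accessible to amplitude estimation. The second point worth stressing is why the block encoding is essential rather than estimating each $\bra{\psi}H_j\ket{\psi}$ separately: splitting an error budget $\sum_j|h_j|\epsilon_j\le\epsilon$ across the $L$ terms and summing the per-term amplitude-estimation costs $O(1/\epsilon_j)$ yields at best $O((\sum_j\sqrt{|h_j|})^2/\epsilon)$, which can exceed $\sum_j|h_j|/\epsilon$ by a factor of up to the number of terms $L$. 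Performing a single amplitude estimation on the LCU-encoded $H/\lambda$ is precisely what removes this overhead and achieves the stated scaling.
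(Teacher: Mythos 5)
Your proof is correct, but it takes a genuinely different route from the paper's. The paper estimates each expectation value $\bra{\psi}H_j\ket{\psi}$ \emph{separately}: it runs a Hadamard test for each term, builds a Grover oracle from $I-2V\ket{0}\ket{\psi}\bra{\psi}\bra{0}V^\dagger$ and $I-2P_0$, applies amplitude (phase) estimation to learn each term within error $\epsilon_j=\epsilon/|h_j|$ at cost $O(1/\epsilon_j)$, and sums the costs to get $O(\sum_j|h_j|/\epsilon)$. You instead block-encode the whole of $H/\lambda$ with a single $\mathrm{PREP}/\mathrm{SEL}$ pair, conjugate by $U$, and run \emph{one} amplitude estimation on the resulting Hadamard-test probability $p=\tfrac12(1+\langle H\rangle/\lambda)$. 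Your closing observation is the real point of divergence, and it cuts in your favour: with the paper's allocation $\epsilon_j=\epsilon/|h_j|$ the accumulated error is $\sum_j|h_j|\epsilon_j = L\epsilon$ ($L$ the number of terms), so the per-term scheme only meets a true $\epsilon$ target after rescaling, which inflates the cost by a factor of $L$; and even the Cauchy--Schwarz-optimal per-term budget gives $O\bigl(\bigl(\sum_j\sqrt{|h_j|}\bigr)^2/\epsilon\bigr)$, which can exceed $\sum_j|h_j|/\epsilon$ by up to a factor of $L$. Your single LCU-based estimation genuinely achieves the stated $O(\sum_j|h_j|/\epsilon)$ without that overhead, at the price of requiring the coherent $\mathrm{PREP}/\mathrm{SEL}$ machinery and an ancilla register, whereas the paper's version needs only one controlled-$H_j$ at a time. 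Your handling of the sign via the shifted probability $p\in[0,1]$ is also the right fix for a pitfall the paper's proof sidesteps by phase-estimating the Grover eigenvalue $\pm2\sin^{-1}\bigl(\sqrt{[1+\mathrm{Re}\bra{\psi}H_j\ket{\psi}]/2}\bigr)$ directly.
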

\begin{proof}
Our algorithm proceeds as follows.  For each term in the variational ansatz we  use a form of amplitude estimation to estimate $\bra{\psi}H_j\ket{\psi}$ within error $\epsilon/|h_i|$.  We then sum the errors from these estimates to find the final answer.

These expectation values can be estimated using the Hadamard test, which yields a qubit that is $0$ with probability ${\rm Re}\bra{\psi}H_j \ket{\psi}$ or ${\rm Im}\bra{\psi}H_j \ket{\psi}$ depending on which version of the test is used.  Let $V$ be the unitary that implements the Hadamard test (which is unitary because $H_j$ is unitary), then we can define the following operators 
$I - 2V\ket{0}\ket{\psi}\bra{\psi}\bra{0}V^\dagger$ and $I - 2P_0$ where $P_0$ is the projector onto the subspace where the qubit in question is $0$.  Because $\ket{\psi}= U \ket{0}$ these operators can be implemented using multiply-controlled $Z$ gates and at most one query to $U$ and another to $U^\dagger$.  Since the Grover oracle is found by concatenating these two operators, it can be implemented using $\Theta(1)$ queries to $U$ and $U^\dagger$.

Without loss of generality, we will focus on learning the real component of the expectation value since $H_j$ is Hermitian and unitary.
The eigenvalues of the Grover oracle are then $\phi_j=\pm 2\sin^{-1}(\sqrt{[1+{\rm Re}\bra{\psi}H_j \ket{\psi}]/2})$.  We can then learn these expectation values using phase estimation, which requires with high probability $O(1/\epsilon_j)$ applications of the Grover oracle to learn the eigenvalue to within error $\epsilon_j$.

Now let us assume that we learn $\phi_j$ within error $\epsilon_j$.  It then follows that the inferred value of the expectation value satisfies
\begin{equation}
\widehat{\bra{\psi}H_j \ket{\psi}}=2\sin^2\left(\frac{\phi_j +\epsilon_j}{2} \right)-1.
\end{equation}
Since $\sin^2(\cdot)$ is a differentiable function, Taylor analysis shows that the error in $\bra{\psi}H_j \ket{\psi}$ is 
\begin{equation}
\left|\widehat{\bra{\psi}H_j \ket{\psi}}-{\bra{\psi}H_j \ket{\psi}}\right|\in O(\epsilon_j).  
\end{equation}
This shows that, up to a multiplicative constant, the error in the inferred expectation value is the error in the phase estimation step.

Now let us pick $\epsilon_j = \epsilon/|h_j|$.  It then follows, from the triangle inequality, and the fact that the errors must be multiplied by $h_j$, that the total error in the calculation of the Hamiltonian expectation value is at most $\epsilon$.  The total number of applications of $U$ and $U^\dagger$ required by the algorithm is then
\begin{equation}
O\left(\sum_j 1/\epsilon_j \right)\subseteq O\left(\sum_j |h_j|/\epsilon \right),
\end{equation}
as claimed.

\end{proof}

This scaling is significant because it shows that, in the worst case scenarios, the scaling of the algorithm is comparable (up to logarithmic factors) to IPEA being applied on top of the best known Hamiltonian simulation methods.  Indeed, it can also be preferable because it does not require a full Trotter step to be applied.  However, it does require much deeper circuits than traditional VQE, which limits its applicability in non-fault tolerant applications.

Therefore the ultimate reason for using IPEA in our algorithm is not because it provides better asymptotic scaling above VQE for estimating groundstate energies when amplitude estimation is employed.  The best reason to use IPEA over VQE is that it provides a projector onto the eigenvectors of the Hamiltonian, which is something that cannot be promised with a polynomial sized variational ansatz. This means that the final phase estimation in our algorithm should be viewed as a means of addressing shortcomings in the variational ansatz, and the VQE step should be seen as a systematic approach to addressing the state preparation problem in IPEA on a fault tolerant quantum computer.

\section{Phase Estimation Without Quantum Collapse}
A major drawback of quantum phase estimation is that, while it substantially reduces the depth of the entire inference process, it cannot be broken up into a series of short experiments. These long evolutions are needed in order to collapse a quantum state into a single eigenvector.  Here we address this issue by introducing a new approach to phase estimation that does not require the quantum state to be collapsed; instead, we collapse the state classically.

The main idea behind our approach is to use a classical learning algorithm to patch together a series of experiments performed on identically prepared quantum states to learn a single eigenvalue.  
If we have a single eigenstate fed into the IPEA circuit then the probability of measuring ``0'' is 
\begin{equation}
P(0|\lambda_j;t,\phi)=\cos^2((\lambda_j-\phi)t/2),
\end{equation}
since this is a Bernoulli experiment $P(1|\lambda_j;t,\phi) = 1-P(0|\lambda_j;t,\phi)$.  These probabilities are also referred to as the likelihood function.
Bayesian approaches to phase estimation use the likelihood function to update a probability distribution, known as a ``prior'' distribution, that describes the current state of knowledge of the eigenphase.  
Approximate methods such as the rejection filter phase estimation 
\cite{Wiebe2016} or sequential Monte-Carlo algorithms have been successfully used to solve this problem.

However, if we have an initial state of the form $\ket{\psi} = \sum_{j=1}^{2^n} \alpha_j \ket{\lambda_j}$ where $\ket{\lambda_j}$ are eigenstates of $U$ then the likelihood function takes the form
\begin{equation}
P(0|\lambda_1,\ldots,\lambda_{2^n};t,\phi)=\sum_{j=1}^{2^n}|\alpha_j|^2\cos^2((\lambda_j-\phi)t/2),
\end{equation}
If we want to apply standard approaches to learning the groundstate energy, $\lambda_1$, then we need to store all $2^n$ eigenvalues in order to perform the update properly.  This is of course intractable for large $n$ and a natural approach to combat this is to truncate the space.  

Our approach here is to truncate the space by only inferring a single eigenvalue from such experiments by marginalizing over the $\lambda_j$ that do not correspond to the ground state.  This yields a likelihood function of the form
\begin{equation}
P(0|\lambda_1;t,\phi) := 1/(2\pi)^{2^n-1}\int P(0|\lambda_1,\ldots, \lambda_{2^n}) \mathrm{d}\lambda_2 \cdots \mathrm{d}\lambda_{2^n}=|\alpha|^2 \cos^2((\lambda_1-\phi)t/2) + {(1-|\alpha|^2)}/{2}.\label{eq:likeapprox}
\end{equation}
where the parameter $\alpha$ is key for this algorithm working properly.  
Bayesian inference can be adopted to learn $\lambda_1$ using adaptively chosen experiments, such as with the heuristic proposed in \cite{Wiebe2016}. 

We then solve the inference problem by using a modified version of rejection filter phase estimation (RFPE).  The algorithm works by discretizing the prior distribution over $P$ points and then performs approximate Bayesian inference on the discrete distribution.  The posterior mean and variance are computed and then the posterior is replaced by a Gaussian distribution with the same mean and standard deviation.  The interval used is then rescaled and shifted to accommodate the new distribution.  This process is repeated until the posterior variance becomes sufficiently small.  

This process has one key advantage over exact Bayesian inference.  The ability for the algorithm to exclude hypotheses allows it to remain stable even if the support of the initial state is broad in the eigenbasis of $U$.  In order to see this, consider an experiment with two eigenvalues and assume that the prior distribution is a sharply peaked Gaussian about $\lambda_1$ with $\sigma \ll |\lambda_1 -\lambda_2|$.  If this is the case and a datum $D$ is observed that corresponds to $\lambda_2$ then the likelihood function corresponding to this hypothesis is $\cos^2((\lambda_2 - \phi)t/2)$.  Since $t=1/\sigma$, $|\lambda_2-\phi|t \gg 1$ and in general as $t$ increases will be uniformly distributed, mod $2\pi$, over $[0,2\pi)$.  This means that the shift in the posterior arisen from this hypothesis will be random and because the prior has no support over the hypothesis, such updates will not validate $\lambda_2$.  Instead these updates randomly perturb the posterior distribution for $\lambda_1$, which given $\alpha$ is chosen appropriately small will not have a net effect on the learning.  On the other hand, if the datum was taken from $\lambda_1$ then this argument no longer holds because $ |\lambda_1 - \phi|/\sigma$ will be of order unity with high probability.  Thus we expect such an inference algorithm to be capable of learning despite the presence of additional eigenvalues.

We see in Fig.~\ref{fig:2evs} that this in fact does occur.  Two eigenvalues are considered and the initial state is chosen such that there is a uniform distribution over each eigenvalue.  We choose a uniform distribution because it is maximally pessimistic.  We then repeat the inference algorithm for a fixed number of iterations and compute the minimum difference between the posterior mean and either of the eigenvalues.  We see that despite the fact that half the data comes from a different eigenvalue, the algorithm is able to rapidly learn one of the eigenvalues exponentially quickly.  The eigenvalue it outputs is random with probability 1/2 of outputting either eigenvalue.  This verifies that quantum wave function collapse is not needed to infer eigenvalues and that the data can be classically ``collapsed" by introducing biases into the approximate Bayesian inference algorithm.
 
\begin{figure}
\centering
\includegraphics[width=0.5\linewidth]{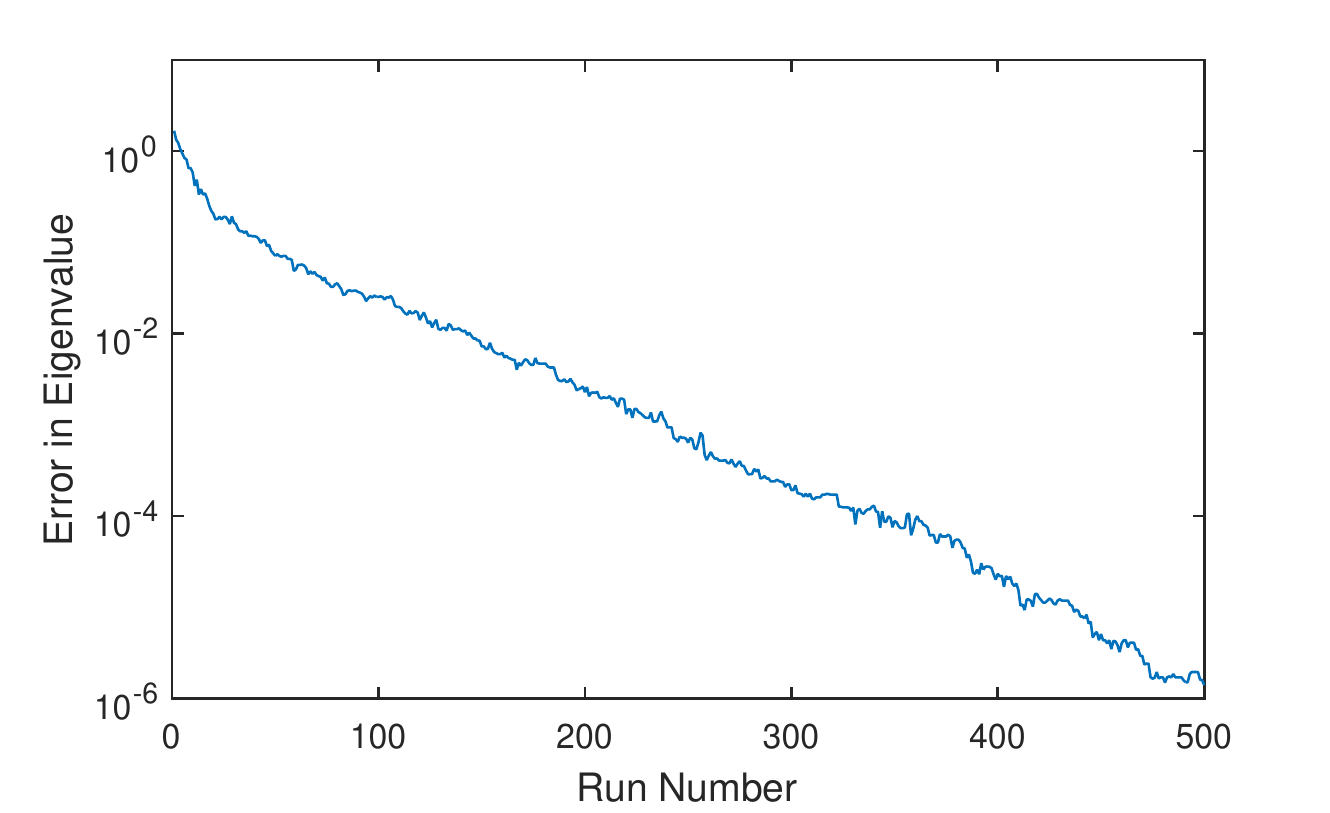}
\caption{Median inference error in eigenvalue inference with $\alpha=1/2$ for a distribution with two randomly chosen eigenvalues and likelihood function~\eqref{eq:likeapprox} is used.}\label{fig:2evs}
\end{figure}

\section{Experimental details}

\subsection{The Silicon quantum photonic device}

The silicon-on-insulator (SOI) quantum photonic chip was fabricated on a silicon-on-insulator material system using $248~nm$ deep-UV photolithography and dry etching a mass manufacturable approach. Single-mode waveguides were designed with a width of $450~nm$ and thickness of $220~nm$, and covered with a $1 \mu m$ silicon dioxide upper cladding. Spot-size converters with $300~\mu m$ long inverse taper with a $200~nm$ wide tip beneath a $4 \times 4~\mu m^2$ polyimide waveguide were used to couple photons into and out of the device. Approximately, a $8~dB$ coupling loss per facet was observed between the waveguide and single-mode lensed fiber with a $3~\mu m$ mode-diameter. 
Passive beam-splitters with near $50\%$ reflectivity were realized using multi-mode interferometers (MMIs) couplers with a footprint of $2.8~\mu m \times 28~\mu m$. The waveguide crossing in the device showed a $-40~dB$ crosstalk between two waveguides, substantially negligible for the purpose of this work. 

This device integrates key functionalities for photonic quantum information processing, such as photonic qubit quantum gates and photon-pair sources.
The two spirally-wrapped waveguide sources with a $1.2~cm$ length were designed to generate single photon-pairs by spontaneous four-wave mixing (SFWM) \cite{Sharping2006, Agrawal2006}.
The photonic qubits in this device are realised through path encoding, as shown in \cite{Shadbolt2012, Politi2008}. Such encoding allows easier reconfigurability, compared for example with polarization encoded qubits~\cite{Crespi2011}. 
The structure of the entangled source is equivalent to that one presented in \cite{Wang2015}, and is used to generate and entangled, path-encoded two qubit states, see SM~4.

The reconfigurability relies on thermo-optical phase shifters, which consists of metal resistive heaters on top of the silicon waveguides isolated by a dioxide layer in between. The heat locally dissipated in the waveguide induces the refractive index change responsible of the phase-shift. The heaters can be independently driven and controlled by home-designed multi-channel electronic. Each heater has independent connections for voltage supply but not for the ground that instead was common to all the heaters, causing electrical cross-talk between them.  
In order to drastically reduce the electrical cross-talk the phase shifters were driven in current. The heater driver board was composed of 12 bits DAC connected to amplifiers and driven through RS-232 interface. This allows to control the phase with a precision of $\simeq 0.01 ~rad$ in our experiment.
Thermal cross-talk was observed between the different degrees of freedom of the silicon quantum photonic device, the effect of which was partially compensated by performing  different calibrations for different chip configurations.

\subsection{Experimental Setup} 
\label{sec:experiment_setup}

A schematic description of the experimental set-up is reported in Suppl.~Fig.~\ref{fig:expsetup}.
The silicon quantum photonic chip was mounted and wire-bonded on an electronic PCB that was glued by thermal epoxy on a Peltier device connected to a heat-sink and controlled by a PID algorithm to stabilise the temperature of the chip. The input/output optical fibers were automatically recoupled by maximizing the coupling efficiency between waveguides and fibers.

\begin{figure*}
\begin{center}
\makebox[0.75\textwidth][c]{\includegraphics[width=0.85\textwidth]{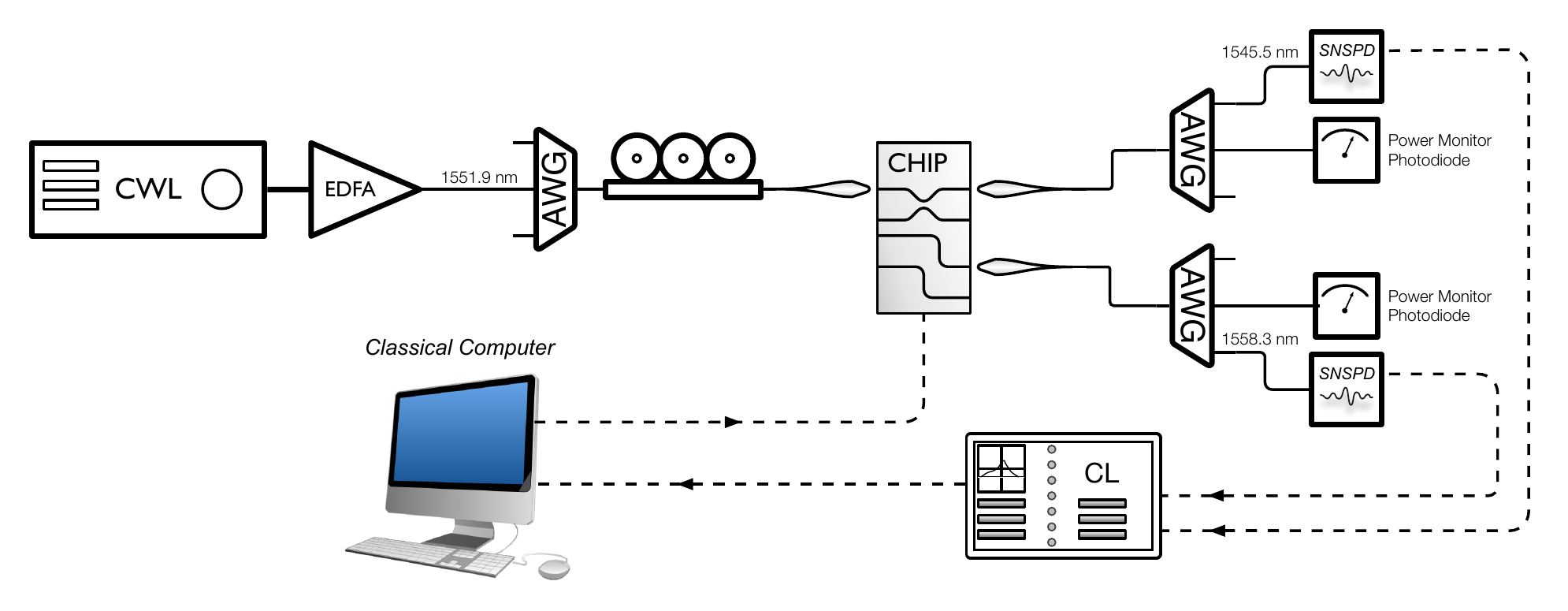}}
 \caption{ Schematic representation of the experimental set-up. The light from a CW laser light source is amplified through an EDFA. The pump is filtered through an AWG and the polarisation adjusted through a manual polarisation controller, then focused inside the chip and collected by lensed-fibres. The emerging light is then filtered through AWG, where the single photons channel and the pump are separated. While the CW light is detected by standard photo-diodes for monitoring the coupling, the single photons are detected through super-conductive nano-wire single photon detectors SNSPDs, whose pulses are analysed by a time interval analyser connected to a classical computer. The classical computer is also used to control the on-chip thermo-optic phase shifters.
 }\label{fig:expsetup}
 \end{center}\end{figure*}

A continuous-wave (CW) bright laser at $1551.9~nm$, amplified by an erbium doped fiber amplifier (EDFA), with an emerging power of  $\sim 10~mW$, was used to pump the two sources for producing photon pairs, i.e. $signal$ and $idler$ photons, by spontaneous four-wave mixing (SFWM)~\cite{Sharping2006, Agrawal2006}.
The use of a fiber-polarization controller before the silicon device ensured that the transverse-electric (TE) polarised CW light was injected into the device, after the background was removed through the use of array waveguide gratings (AWG) wavelength division multiplexers (WDM).
The output SFWM-generated photons in the quantum device were separated apart from the pump light by off-chip AWG  with a $0.9~nm$ bandwidth and $>90~dB$ extinction. We chose the signal and idler photons at the wavelength of $1545.5~nm$ and $1558.3~nm$ respectively. 
Photons were routed into single-mode fibers and finally detected using superconducting nano-wires single-photon detectors (SNSPDs) from PhotonSpot\textsuperscript{TM}, with an average efficiency of $85\%$, $100Hz$ dark counts and $50~ns$ dead-time. 
The coincidence counts were recorded by the use of a time interval analyser (Picoharp 300 by PicoQuant\textsuperscript{TM}) with an integration time of $10~s$ per data point and a coincidence window of approximately $400~ps$. 
A maximal photon-pair rate of about $150~Hz$ was observed.

The CW light and single photons were used to characterise the single-qubit gates composing the state preparation, unitary operation, and quantum state tomography. Using the CW light, we measured the mapping between the injected current and electronic power consumption, and between the electronic power and optical power, from which we linked the current and phase for each phase-shifter. A constrained least squares algorithm was used to fit the experimental data. 
The characterisation of each part of the device was performed for different configurations of the remaining parameters, in order to properly compensate for thermal cross-talk. Both $\lambda$-classical interference and $\lambda/2$-quantum interference, shown in Fig.2 inset, confirm the high quality of the device and its characterisation. Interference plots were obtained for the control qubit by setting the final MZI in the corresponding register as a Hardamard gate (i.e relative phase shift of $\pi/2$) and collecting  photon coincidences in the top two modes of the device (same procedure can be performed for the target qubit).

\subsection{Controlled unitary operations}
The implementation of WAVES requires a quantum device able to implement arbitrary controlled unitary ($C\hat{U}$) operations.
The difficulty to implement arbitrary $C\hat{U}$ gates in quantum photonics has limited first implementations of quantum algorithms to particular classes of problems with specific symmetries~\cite{Lanyon2010}.  
In this experiment we were able to implement an arbitrary $C\hat{U}$ using an entanglement-based scheme which does not require pre-decomposition~\cite{Zhou2013}. We report for the first time its adoption in integrated photonics. In this section we describe the implementation of this scheme.

Following figure 2 in the paper, the state generated by spontaneous four wave mixing (SFWM) in the two spiral sources, considering only 2 photon generation events, is given by $(|0200\rangle+|0020\rangle)/\sqrt{2}$. After the MMIs splitters  and the central crossing the full state becomes $(|2000\rangle+|0200\rangle+|0020\rangle+|0002\rangle+2 |1010\rangle+2 |0101\rangle)/\sqrt{8}$. 
Using the post-selection scheme \cite{Wang2015}, only the part corresponding to a Bell state is picked up and used for the computation.
\begin{equation}
\frac{|1010\rangle+|0101\rangle}{\sqrt{2}}
\end{equation}
The disregarded part of the state $(|2000\rangle + |0200\rangle)/\sqrt{2}$, which can be filtered out by post-selection of the output photons in the control qubit modes, was used to test the quality of the sources and circuits. By setting the tomography stage of the control qubit as the Hadamard rotation with a continue rotation around the Pauli $z$ basis $\hat R_z(\theta)$, we observed quantum interference with a visibility of $1.00\pm 0.02$ (see inset in Fig.2 of the main text).

Labelling the logic state $|0\rangle$ ($|1\rangle$) for the top (bottom) path of each photon, referring to Fig.2 of the main text, the input Fock state can be rewritten as the qubit state $(|0\rangle_C|0\rangle_P+|1\rangle_C|1\rangle_P)/\sqrt{2}$. Two additional modes are added in the bottom paths, which can be represented as the addition of another qubit, which is our target qubit, giving the state $(|0\rangle_C|0\rangle_T|0\rangle_P+|1\rangle_C|1\rangle_T|1\rangle_P)/\sqrt{2}$. For both bottom paths the two modes are prepared in the same input state $|\psi\rangle_T$, using the operations reported in the main text. The state is later evolved according to $\hat{I}$ in the upper path and $\hat{U}$ in the lower one, giving the total state
\begin{equation}
\frac{ |0\rangle_C|\psi\rangle_T|0\rangle_P+|1\rangle_C(\hat{U}|\psi\rangle_T)|1\rangle_P}{\sqrt{2}}
\end{equation}
At this point, using a waveguide crossing and two integrated beam-splitters we erase the path information between the upper and the lower path, which is equivalent to apply an Hadamard gate to the third qubit, obtaining
\begin{equation}
\frac{ (|0\rangle_C|\psi\rangle_T+|1\rangle_C\hat{U}|\psi\rangle_T)|0\rangle_P+(|0\rangle_C|\psi\rangle_T-|1\rangle_C\hat{U}|\psi\rangle_T)|1\rangle_P} {2} 
\end{equation}
Post-selecting the cases where the target photon emerges from one of the top modes, i.e. projecting in $|0\rangle_P$, we finally get the output state
\begin{equation}
\frac{ |0\rangle_C|\psi\rangle_T+|1\rangle_C\hat{U}|\psi\rangle_T } {\sqrt{2}}
\end{equation}
which gives the desired arbitrary control-unitary operation.

We remark that applying other operations after the $\hat{U}$ this scheme also allows to cascade different control-unitary operations in a non-compiled way. This can be used, for example, to perform Trotterization for the time evolution.
However the probabilistic nature of this scheme (due to the post-selection) makes it not scalable.

\begin{figure*}[h!] 
\begin{center}
\makebox[\textwidth][c]{\includegraphics[width=0.5\textwidth]{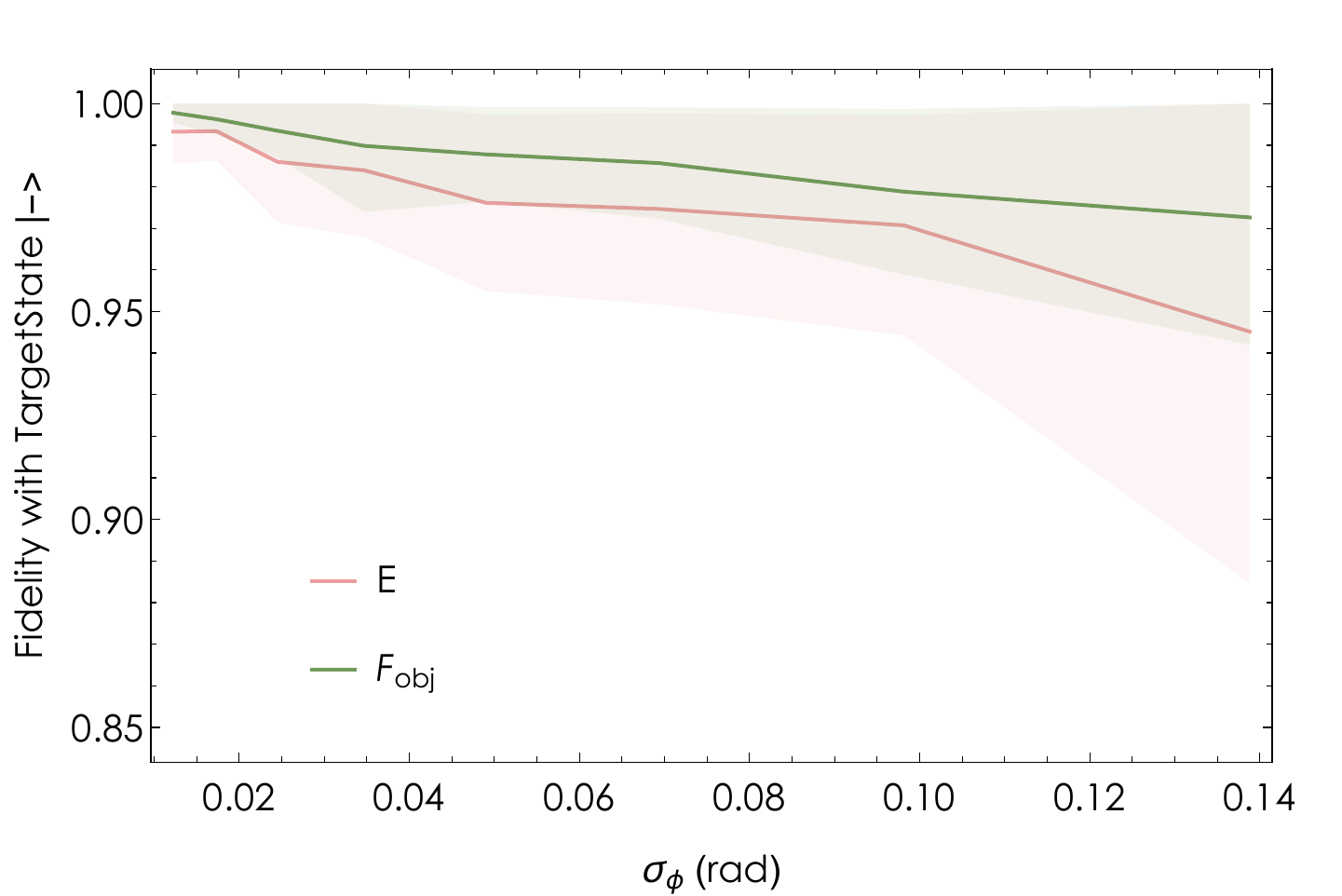}}
 \caption{\textbf{\normalsize} 
 \textsf{Numerical simulations of the variational ground-state search robustness for the bosonic 1-qubit $\mathcal{\hat{H}}$ against gate infidelities, using  $\mathcal{F}_{\text{obj}}$ or $\mathcal{E}$ alone. Infidelities in the gate implementation is modelled as Gaussian noise in the phase shifting components of the photonic circuit. Average fidelities with the true eigenstate $F_{|-\rangle}$ (solid lines) along with 67.5 \% confidence intervals (shaded areas) are reported after algorithm's convergence. }}
 \label{fig:heatnoiseplot}
 \end{center}\end{figure*}

\section{Robustness against experimental noise of the variational search}\label{sec:noise_robust}
 
In this section we investigate numerically the effect of noise in the WAVES variational search. This paragraph focuses upon the ground-state search case experimentally performed in this paper, where noise affects both estimators $\mathcal{E}$ and $\mathcal{P}$, guiding the variational search of the Hilbert space. We introduce two different sources of noise, expected to play a major role in our setup: 
\begin{itemize}
\item Poissonian noise arising from coincidence events counts (see  also SM~\ref{sec:numeric_perfom}).
\item Infidelity in the gates implementation (i.e. uncertainty in the phases implemented by the thermo-optical phase shifters on chip), affecting both the state preparation and the C${\hat{U}}$ implementation.
\end{itemize}  
Studying a noise model for this implementation of WAVES also allows to address whether the proposed objective function $\mathcal{F}_{\text{obj}}$ is capable of providing any advantage (or disadvantage) in the robustness of the search, compared to the adoption of a variational search based upon the energy estimator alone (therefore, equivalent to previous VQE ground-state search  demonstrations~\cite{Lanyon2010,Peruzzo2014,OMalley2016}). 
Data from 100 averaged numerical simulations, running the variational search with either $\mathcal{F}_{\text{obj}}$ or a simplified objective function $\mathcal{F}'_{\text{obj}}=\mathcal{E}$, are reported in Suppl.~Fig.~\ref{fig:heatnoiseplot}.\\
When Poissonian noise alone is considered discrepancies observed between the performances of $\mathcal{F}_{\text{obj}}$ and $\mathcal{F}'_{\text{obj}}$ are small, and any difference among the two approaches falls within the (statistical) error bars. For this kind of noise, therefore, nothing can be concluded in favour or against the two approaches.

Infidelity in the implementation of gates, for a digital quantum simulator without error correction, is a particularly critical feature~\cite{LasHeras2016}. In order to simulate noise affecting the qubit operations, for each thermo-optical phase shifter manipulating the target qubit we replaced the correct phase $\bar{\varphi_i}$ -- required to implement the transformation -- with a synthetic value $\varphi$, sampled from a Gaussian distribution $\varphi_i \sim \mathcal{N}(\bar{\varphi}_i,\sigma_{\varphi})$. As $\sigma_{\varphi}$ increases, the behaviour of the simulated physical circuit deviates from the ideal case. This is a realistic way to simulate such noise in our photonic setup. Each phase shift implemented is proportional to the power applied to heating element, in turn affected by electrical and thermal fluctuations. A thorough characterization is instead expected to avoid systematic biases in some of the phases $\{ \bar{\varphi} \}$. Phases involved in the control register have been excluded from this simulation. Indeed, the scope was to emphasize the role of imperfect qubit operations in the target manipulation, while neglecting contributions from errors in the tomography of the control qubit.\\ 
This noise model applied to experimental conditions led to the results reported in Fig.~3 C-F of the main text, reproducing quite accurately the experimental findings. In that case, according to device characterization data, it is assumed $\sigma_{\varphi} \simeq 0.012$ \textit{rad} in the simulations. Slight differences between the predictions and the experimental data can be likely explained via uncharacterised residual thermal cross-talk between the phase shifters (not taken into account in the model), that contributes to systematic errors in the implemented phases.
A further investigation of how disruptive such gate infidelities may be in noisier implementations is instead provided in  Suppl.~Fig.~\ref{fig:heatnoiseplot} for the bosonic Hamiltonian experimentally investigated in the main text. For this kind of noise, adopting the objective function $\mathcal{F}_{\text{obj}}$ appears a consistently more robust strategy than using $\mathcal{E}$ alone as an estimator: a difference that increases with the noise level. In particular, already for $\sigma_{\varphi} \simeq 0.14$ the final fidelity achieved in the search with $\mathcal{E}$ alone falls below $F_{\ket{-}} = 95\%$, whereas $\mathcal{F}_{\text{obj}}$ keeps providing accurate ground-state estimates.

\begin{figure}
\centering
\includegraphics[width=0.9\linewidth]{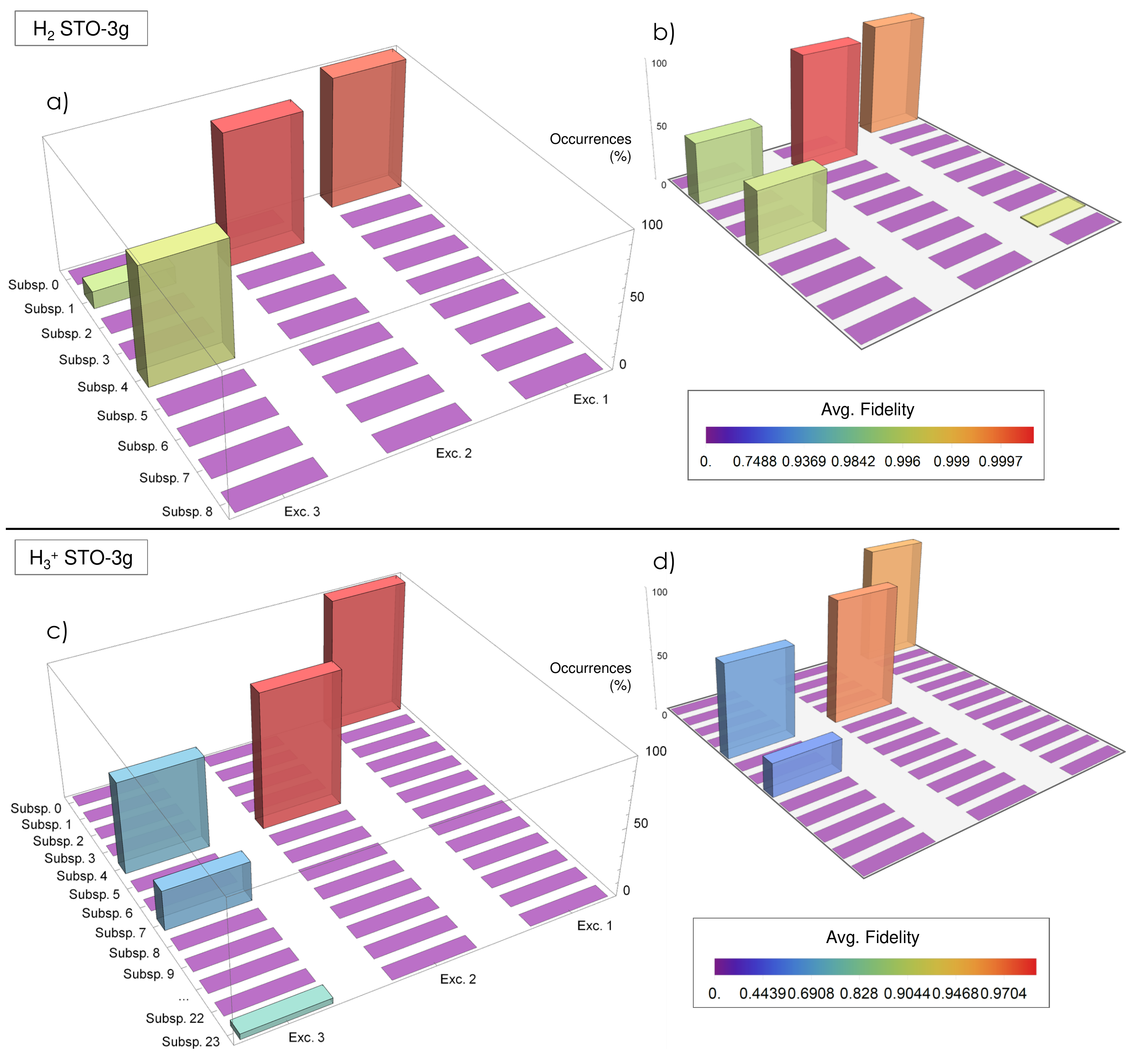}
\caption{
Synopsis of numerical simulations of excited states searches for the molecules H$_2$ and H$_3^+$, adopting the full PH ansatz (\textbf{a\&c}). Results for the synthetically reduced ansatz are in the inset outlined in light grey (\textbf{b\&d}). On one horizontal axis are reported the non-degenerate subspaces spanned by the eigenstates of the corresponding Hamiltonian ('\textit{Subsp.}~$i$'), which are $9$ for H$_2$ and $23$ for H$_3^+$. On the other axis are the 3 different excitation operators of the form $E_p$ ('\textit{Exc.} $j$') tested.
Height of the histogram with coordinates $\{i,j\}$ corresponds to the relative frequency with which the swarm of particles collapses onto the $i$-th excited subspace, when $\hat{E}_{p_j}$ is adopted in the second phase of the search. 
We accept the swarm has collapsed onto the $\mathbb{E}_i$, whenever the corresponding projector $ \prod_i $ exhibits the highest overlap with the estimate $\ket{\Psi}$ at the end of the search (see Supp.~Eq.~\ref{eq:degproj}). Fidelities reported for a certain subspace $\mathbb{E}_i$ and $\hat{E}_{p_j}$ are calculated averaging the final fidelity achieved by all and only those algorithm runs, where the swarm had collapsed onto $\mathbb{E}_i$. The bottom-right bar legend explicates the colour coding.}\label{fig:summaryplot}
\end{figure}

\section{Numerical simulations of Hydrogen molecules}
\label{sec:numsimH}

\begin{table}[!ht]	
	\begin{center} \centering{
	\begin{tabular}{c|ccccc}
		\hline
		&
		\begin{tabular}{c}
			truncated PH \\ $\dim(\vec \theta)$
		\end{tabular} 
		& 
		\begin{tabular}{c}
			standard PH \\ $\dim(\vec \theta)$
		\end{tabular} 
		& 
		\begin{tabular}{c}
			unrestricted UCCSD \\ $\dim(\vec \theta)$
		\end{tabular} 
		& 	$\dim(\{\hat{E}_{p_i}\})$ 
        & 	$\ket{\Phi}$ 
        \\ \hline
		\rule{0pt}{3ex}    H$_2$ (4 qubit)		& 13 & 15 & 110 & 3 & \ket{1100}\\
		\rule{0pt}{2.5ex}  H$_3^+$ (6 qubit)	& 60 & 66 & 462 & 15 & \ket{111000} \\
		\rule{0pt}{2.5ex}  H$_3$ (6 qubit)		& 112 & 114 & N.A. & 15 & \ket{110000}\\
		\rule{0pt}{2.5ex}  H$_4$ (8 qubit)		& 181 & 185 & N.A. & 22 & \ket{11110000}
	\end{tabular} }\end{center}
    \caption{\textbf{\normalsize} \textsf{Summary of ans{\"a}tze used for simulations in the main paper and in the SM for the various systems investigated, along with the cardinality of their parametrization - $\dim(\vec \theta)$. PH refers to the parametrized Hamiltonian ansatz in Methods Eq.~(7). We report in this same table also the number of excitation operators - $\dim(\{\hat{E}_{p_i}\})$ - that have been tested in numerical simulations and the reference state for the system $\ket{\Phi}$.}}
    \label{tab:ansatze} 
\end{table}
We expand in this paragraph the numerical simulations beyond the case experimentally tested, in order to investigate the capability of WAVES to detect different eigenstates for systems of dimensionality as high as 8 qubits. Simulations of the WAVES algorithm are here performed at a logical circuit level, unlike the simulations in SM \ref{sec:numeric_perfom} and \ref{sec:noise_robust}, where a model for the photonic circuit behaviour had been adopted for the simulations. However, all the results presented in this section include a binomial noise model for the control qubit tomography (conceptually equivalent to the Poissonian noise model for a photonic setup). The latter one was introduced in SM~\ref{sec:noise_robust}. The level of noise chosen in the simulations is compatible with what obtained in a standard experimental run for the 1 qubit case (i.e. 
$1500$ measurements for each control qubit tomography).\\
In all cases we adopt the protocol as described in Fig.~1 of the main text. That is, we first variationally optimize the parametrization $\vec \theta$ provided by one of the ans{\"a}tze in Supp.~Table~\ref{tab:ansatze} learning the optimal $\vec{\theta}_g$ to prepare the ground-state. For this search, we make no use of a-priori knowledge about the ground-state: a reference state for the system is the initial guess for the ground state, i.e. $\hat{A} ({\vec \theta_0}) \ket{\Phi} \equiv \ket{\Phi}$ in Eq.~(9) of Methods. 
We draw the initial swarm of particles from a Gaussian distribution centred around $0$ for each parameter ${\theta_i}$, with a standard deviation chosen arbitrarily to be much bigger than $\max \{ |\theta_i| \}$.
For the different molecular systems analysed, this is equivalent to having very different average initial overlaps with the ideal ground state $\ket{\Psi_g}$, as calculated diagonalizing the corresponding Hamiltonian. Precisely, $|\langle \Psi_0 |  \Psi_g \rangle|^2$ is in the range $0.7 - 0.8$ for H$_2$ and H$_3^+$, but as low as $0.2$ for both H$_3$ and H$_4$. Even when starting from such poor initial guess states, WAVES was capable of achieving fidelities higher than 95\% with the ideal target state . In those cases where a better initial overlap bootstrapped the variational search, final fidelities observed where in excess of 99\%. \\
This discrepancy in the fidelities shall not be misinterpreted as a fundamental limit deriving from the quality of the initial guess provided to WAVES. Indeed, it is worth mentioning that in order to simulate realistic experimental conditions for near-term WAVES implementations, not only we included a noise model in the control-qubit tomography, but we restricted the number of particles in the swarm as equivalently done in SM~\ref{sec:numeric_perfom}. This kept execution times for the protocol within reasonable limits for state-of-art photonic implementations.
E.g. the longest 8 qubit simulations converged after performing only about 500 single-qubit tomographies when adopting 20 particles in the swarm optimization performed over more than 400 parameters. In comparison with $\sim$100 tomographies required by our experimental proof of concept (where 8 particles in the swarm optimized over 2 parameters). 

\begin{table}[!ht]
\begin{center}
\centering{
  \begin{tabular}{c|cc}
   \hline
   \rule{0pt}{2ex}
   & Ansatz allows $\ket{\Psi_0} \rightarrow \mathbb{E}_{i (\tau)}$
   & Ansatz prevents $\ket{\Psi_0 }\rightarrow \mathbb{E}_{i (\tau)}$ \\ \hline 
   \rule{0pt}{4ex}
   \begin{tabular}{c}
      $\ket{\Psi_0}$ provides accurate guess for $\mathbb{E}_{\tau}$\\ 
      $\bra{\Psi_0} \prod_i \ket{\Psi_0} \ll \bra{\Psi_0} \prod_{\tau} \ket{\Psi_0} $, $\forall i \neq \tau$  
   \end{tabular} 
  	&  \begin{tabular}{c}
      \checkmark \\ 
      (convergence to $\mathbb{E}_{\tau}$)
     \end{tabular} 
  	& N.A. \\
   	\rule{0pt}{4ex}
   	\begin{tabular}{c}
      $\ket{\Psi_0}$ provides no accurate guess $\mathbb{E}_{\tau}$\\
      $\exists \; i \neq \tau$ : $\bra{\Psi_0} \prod_i \ket{\Psi_0} \simeq \bra{\Psi_0} \prod_{\tau} \ket{\Psi_0} $
   \end{tabular} 
  	& \begin{tabular}{c}
      \checkmark \\ 
      (convergence to either $\mathbb{E}_{\tau}$, $\mathbb{E}_i$)
  	\end{tabular} 
  	& 
  	\begin{tabular}{c}
      \ding{55} \\ 
  	(no convergence to any $\mathbb{E}_i$)
  	\end{tabular}
\end{tabular} 
}\end{center}
\caption{\textbf{\normalsize} \textsf{Summary of possible situations occurring in numerical simulations, when WAVES is performed with different ans{\"a}tze and excitation operators, targeting a certain excited subspace $\mathbb{E}_{\tau}$ from an initial guess $\ket{\Psi_0}$.}
\label{tab:performsummary}
}
\end{table}

A wide variety of behaviours can be observed when performing WAVES for excited state searches. We remind that in our protocol this is preferentially achieved by applying a set of different excitation operators of the form $\hat E_p$ to the ground state parametrization $\hat{A} (\vec{\theta}_g) \ket{\Phi}$ obtained in the first step. An additional variational search is then performed starting from this initial guess $\ket{\Psi}$, adopting $T \gg 1$ in $\mathcal{F}_{\text{obj}}$ (see Methods ``Excitation operators for Chemical Hamiltonians''). Specifically for this set of simulations where a swarm optimization is performed, the initial set of particles was initially drawn from a Gaussian distribution centred around $\theta_{i_g}$ for each parameter $\theta_{i}$, with an initial standard deviation approximately equal to $\max \{ |\theta_i| \}$. 
When targeting a subspace of degenerate excited states, we replace the fidelity $| \langle \Psi_0 | \Psi_{\tau} \rangle |^2$ for the generic trial state $\ket{\Psi}$ with a generalized expression, given by the projection onto $\mathbb{E}_{\tau}$:
\begin{align}
\label{eq:degproj}
F =
\Big\langle \Psi \Big\lvert \prod_{\tau} \Big\rvert \Psi \Big\rangle = 
\Big\langle \Psi \Big\lvert \sum_{k} (\ket{\Psi_{\tau_k}} \bra{\Psi_{\tau_k}}) \Big\rvert \Psi \Big\rangle
\end{align}
where $\{ \ket{\Psi_{\tau_k}} \}$ is a basis of $\mathbb{E}_{\tau}$. We also assume for the scope of this work that IPEA up to 32 bits of accuracy will refine the eigenvalue estimate provided by the first step of the WAVES protocol. Consequently, targeting spectroscopical accuracy, energy differences smaller than $10^{-9}$ \textit{Ha} will be ignored and corresponding eigenstates will be considered degenerate.  
Finally, in order to simplify the interpretation of the data for each variational step, we also simplistically adopt $\mathcal{F}_{\text{obj}} \equiv \mathcal{-P} \equiv \mathcal{S}-1$.
\begin{figure*}[!ht] \begin{center}

\makebox[\textwidth][c]{\includegraphics[width=0.96\textwidth]{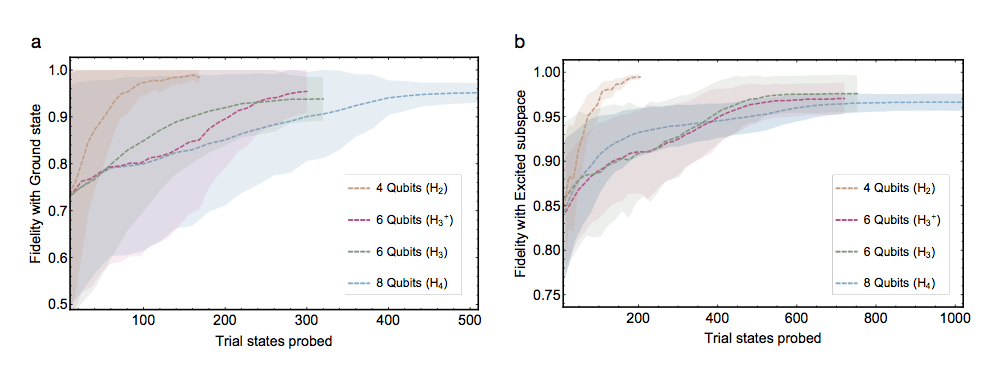}} 
 \caption{
 Numerical simulations of the WAVES variational search for the synthetically truncated PH ansatz is studied in the molecular Hydrogen systems ($\text{H}_2$, $\text{H}_3^+$, $\text{H}_3$, $\text{H}_4$).
 (\textbf{a}) Variational search for the ground-state. 
 (\textbf{b})
Variational search for the targeted subspace of degenerate excited-states with initial excitation perturbation $\hat{E}_{p_i}$. 
The average fidelities achieved by the particle swarm optimisation 
are calculated for 100 independent runs of  WAVES.  
 Dashed lines are average fidelities. Shaded areas indicate $67.5\%$ confidence intervals. 
 On the x-axis we refer to the cumulative number of trial states probed, i.e. the number of particles in the swarm times the variational steps. 
 For ease of comparison,  the x-axis origin  has been shifted in (\textbf{a}) for the various cases in order to have equivalent fidelity for the average initial guess. 
In all simulations a binomial noise model has been taken into account  when performing projective measurements. 
}\label{fig:syn_trunc_ans}
 \end{center}\end{figure*}
In Supp.~Fig.~\ref{fig:summaryplot}a we report a summary of the results for excited-state searches in the H$_2$ system, using 3 different excitation operators $\hat{E}_{p_j}$, and measuring projectors onto all the 9 non-degenerate excited subspaces that characterize the corresponding Hamiltonian. The set of $\hat{E}_{p_j}$ is expected to provide access to a portion of the spectrum with most of its support lying onto low--energy excited states, and a limited support onto higher energy excited states.
We notice how the optimization leads to a final fidelity $F$ with one of the excited subspace(s) exceeding $99 \%+$ in all cases (colour coded). 
In particular, $E_{p_3}$ produces an initial guess that is a superposition of eigenstates belonging to the $1^{\text{st}}$ and $4^{\text{th}}$ excited subspaces. This is a highly non-trivial case: because of the initial standard deviation adopted in the algorithm, the initial swarm of particles spans fidelities that are as low as $0.5$ with either subspace. Performing IPEA directly with this initial guess would perform poorly.
As emphasized in SM \ref{SM:foldedspectrum} adopting a Folded Spectrum method would also be of little help in this case, if insufficient information about the energy spectrum is available.\\
WAVES instead refines the initial guess detecting 87$\%$ of the times the $4^{\text{th}}$ subspace. This preferential collapse approximately matches what expected from the initial guess $\ket{\Psi_0}$, that has $0.78$ overlap with the $4^{\text{th}}$ subspace.

Equivalently, in Supp.~Fig.~\ref{fig:summaryplot}c are summarised the results for the H$_3^+$ molecule. The bigger dimensionality of this system leads to 23 non-degenerate excited subspaces. Among the cases investigated, we report only 3 different excitation operators, with the intent to reproduce a situation similar to what observed for H$_2$, and are thus relabelled as '\textit{Exc. 1-3}'. 
The optimization leads to a final fidelity $F$ with one of the excited subspace(s) exceeding $99 \%+$, in each case where the initial guess accurately targets a single excited subspace.  For excitation operator $E_{p_3}$, the preferential collapse of the particle swarm into either of the subspaces $4, 7$ and $23$ follows well what expected from the initial overlaps of $\ket{\Psi_0}$. 
We remark how a successful variational search for this last case required the adoption of a more accurate ansatz (the UCCSD in Supp.~Table~\ref{tab:ansatze}). $E_{p_3}$ produces an initial guess that is a superposition of eigenstates belonging to 7 different excited subspaces, yet predominantly targeting the $4^{\text{th}}$ excited-subspace (that has the highest overlap with the initial guess $\ket{\Psi_0}$). In order to access this whole portion of the Hilbert space and achieve a successful convergence, the simple PH ansatz would not suffice - see Supp.~Fig.~\ref{fig:unrestfixh3}. As shown in SM~\ref{sec:purityjust}, it is certainly possible for WAVES to provide a sanity check for the final eigenstate estimate obtained. In the same figure we outline how this may be performed in this case, making use of the eigenstate witness $\mathcal{S}$ 
and increasing the evolution time $t$ (for simplicity, we assume to operate always within a range where the Trotterization error is expected to be negligible). If no hardware decoherence issues arise for such longer evolution time, consistent degradation in the final $\mathcal{S}$ obtained at the end of a single run indicate a failure: either the ansatz has to be enriched, either the initial guess perturbed, for WAVES to achieve a successful convergence to a single $\ket{\Psi_{\tau}}$ or $\mathbb{E}_{\tau}$.\\
We will not report for brevity further analyses for the H$_3$ and H$_4$ systems, being the behaviour of WAVES in this case equivalent to findings already discussed for smaller systems.
\subsection{Synthetically reduced ans{\"a}tze}
\label{sec:redoscheme}
\begin{algorithm}[ht]
	\caption{Protocol for the synthetic truncation of the ansatz}
	\label{pseudocode:trunc}
	\KwData{Parametrized ansatz $\hat{A}(\vec{\theta}) \equiv \sum_{i \in I} (\theta_i \hat{A}_i )$, excitation operator $\hat{E}_{p_{\tau}}$ and targeted excited eigenstate $\ket{\Psi_{\tau}}$, threshold $\tilde{F}_0$ for the initial guess }
	obtain from WAVES optimal parameters for the ground state $\hat{A}_g = \hat{A}(\vec{\theta}_g)$, minimizing $\mathcal{F}_{\text{obj}} = -a\mathcal{P} + b\mathcal{E}$ \;
	calculate $\ket{\Psi_0} = \hat{E}_{p_{\tau}} \hat{A}_g \ket{\Phi}$ \;
	initialize $I' = I$ \;
	\While{ $| \langle \Psi_0 | \Psi_{\tau} \rangle |^2 \geq  \tilde{F}_0 $} {
		find $\iota$: $\theta_{g_{\iota}} = \min_i ( \{ | \theta_{g_i} | \})  $ \;
		$ I' \gets I' - \iota $ \;
		$ \hat{A_g'} = \sum_{i \in I} (\theta_{g_i} \hat{A}_i )$ \;
		calculate $\ket{\Psi_0} = \hat{E}_{p_{\tau}} \hat{A'_g} \ket{\Phi}$ \;
	}
\KwResult{truncated ansatz $ \hat{A'} (\vec{\theta}) \equiv \sum_{i \in I} (\theta_i \hat{A}_i )$, with $\dim(I') < \dim(I)$}
\end{algorithm}
One of the arguments often raised against variational methods is that they strongly rely on the accuracy of the ansatz chosen for the variational state preparation.
When considering more correlated and complex systems the accuracy of such ansatz is likely to be heavily limited. 
For this reason, trying to characterise variational methods in the presence of inaccurate or incomplete ans{\"a}tze is of fundamental importance for the assessment of their realistic performances. 

In the previous SM section we presented and discussed simulations mostly obtained with a standard PH ansatz, expected to achieve a reasonable accuracy in describing weakly correlated systems. Here we compare such results with a truncated ansatz $\hat{A'}$, obtained after a gradual removal of those terms in the PH ansatz, corresponding to the smallest values $|{\theta_{g}}|$. A more detailed breakdown of the procedure is reported in Algorithm~\ref{pseudocode:trunc}. 

$\hat{A'}$ thus simulates those cases where only limited knowledge on the physical system is available. 
The results are summarised for both ground and excited state searches in Supp.~Fig.~\ref{fig:syn_trunc_ans} a\&b, respectively. 
To stress the resilience of WAVES to a poor initial guess for the parameters, all the simulations in Supp.~Fig.~\ref{fig:syn_trunc_ans} were performed assuming no \textit{a-priori} knowledge of optimal $\vec\theta$, i.e. the initial guess $\hat{A}(\vec \theta_0) |\Phi\rangle \equiv |\Phi\rangle$.
For all the cases investigated, the variational search in WAVES is able to consistently identify both the ground and targeted excited-states. Fidelities of approximately $95\%$  are achieved, even for cases where the initial guess retained only small overlap with the targeted excited-state and the incomplete ansatz reduces the accuracy in the state-preparation. 
\begin{figure}[!ht]
\centering
\includegraphics[width=0.9\linewidth]{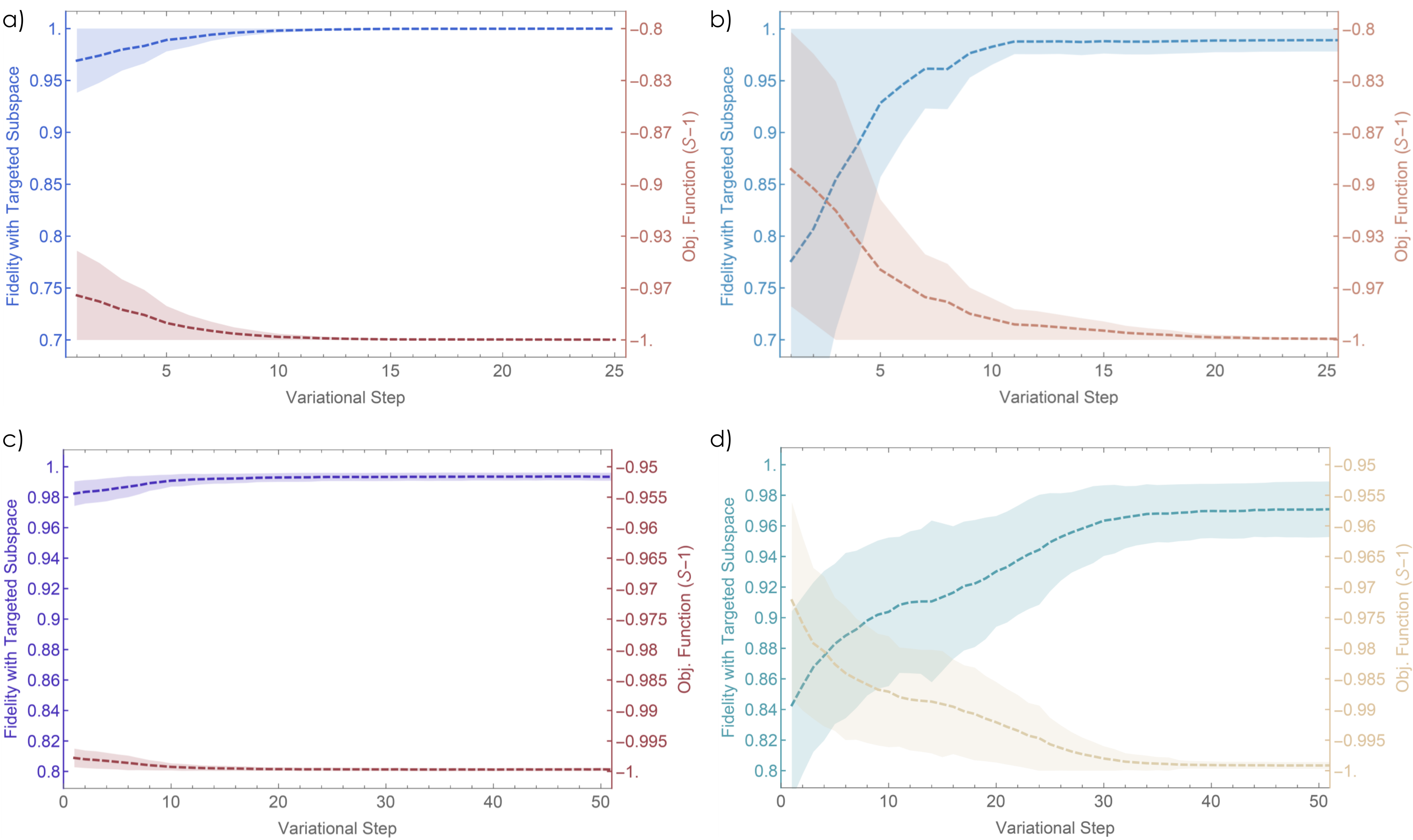}
\caption{Convergence of the WAVES algorithm to a subspace of excited states  in the case of the H$_2$ STO-3g (\textbf{a-b}, upper row) and H$_3^+$ STO-3g (\textbf{c-d}, lower row). On the left (a\&c) we report simulations averaged over 100 runs of the algorithm, adopting the standard PH ansatz. On the right (b\&d, emphasized by fainter colours), the algorithm is run with identical parameters, but adopting a truncated version of the same ansatz (refer to Algorithm~\ref{pseudocode:trunc}). 
Respectively $2$ and $6$ operators were truncated in cases b\&d.
}
\label{fig:pruningH}
\end{figure}
Details of the variational search for excited states in the H$_2$ and H$_3^+$ molecules are reported in Supp.~Fig.~\ref{fig:pruningH} a\&c (b\&d) respectively, comparing simulations with the standard (truncated) version of the PH ansatz (see also Supp.~Table~\ref{tab:ansatze}).
A decrease in the fidelity and uncertainty of the final estimate upon convergence are observed in the truncated versus the untruncated ansatz (compare also the colour-coded fidelities in Supp.~Fig.~\ref{fig:summaryplot}). Also, the restricted ansatz starts from a poorer initial guess and requires more variational steps to converge to the optimal value. Yet the algorithm displays a good resilience against such slightly inaccurate ans{\"a}tze, being still capable to detect the excited subspaces with high fidelity.
Again for the $H_2$ system, in the inset of Supp.~Fig.~\ref{fig:summaryplot}b, we observe how the reduced accuracy of this synthetic ansatz and overlap with the targeted eigenstate diminish the tendency of WAVES to collapse into the closest excited subspace. 

For the more complex case of the H$_3^+$ system, the difference in the final fidelities achieved, as well as in the frequency with which the swarm collapses into either one of the excited subspaces $\mathbb{E}_i$ are also summarised in Supp.~Fig.~\ref{fig:summaryplot}d.

A specific case is displayed in Supp.~Fig.~\ref{fig:pruningH} c\&d (corresponding to '\textit{Exc. 2}'). An accurate initial estimate $\ket{\Psi_0}$ leads to a very rapid convergence to fidelities with $\mathbb{E}_4$ exceeding 99\% for the PH ansatz in (c). In case (d), some of the crucial operators to explore the Hilbert space in proximity of the targeted subspace of excited have been evidently removed. This leads to a suddenly impoverished $\ket{\Psi_0}$, a consequently slower convergence, as well as a suboptimal fidelity with the targeted subspace $F \simeq 96 \%$. \\
\begin{figure}[!ht]
\centering
\includegraphics[width=0.9\linewidth]{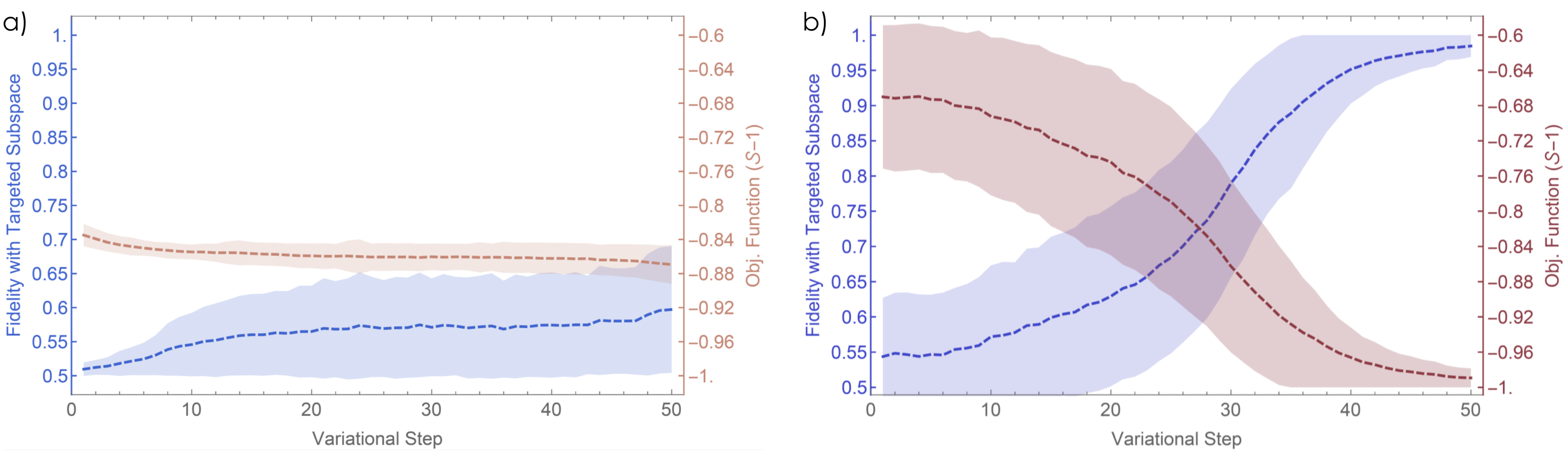}
\caption{
In \textbf{a}), we report the average results from 100 runs of the first part of the WAVES protocol adopting $\hat{E}_{p_3}$ for the PH ansatz in the H$_3^+$ system and $t=10$ in Methods Eq.~(9). The average entropy for the final estimate is $\mathcal{S}>$0.12. 
In \textbf{b}), an equivalent optimization adopts the unrestricted UCCSD ansatz from Supp.~Table~\ref{tab:ansatze}, but the same swarm-optimization parameters as in Algorithm~\ref{pseudocode}. 
A difference can be noticed in the 67.5\% confidence intervals for both $F$ and $\mathcal{S}$ at Step 0. This is due to attributing the same initial uncertainty to each variational parameter $\theta_i$, being $\dim \vec{\theta}$ in the unrestricted UCCSD ansatz about 8 times larger than in the standard PH case.
}
\label{fig:unrestfixh3}
\end{figure}

\subsection{Comparison with folded spectrum method}\label{SM:foldedspectrum}

We implemented a folded spectrum(FS)-like variational search~\cite{Wang_1994}, where we search for the state $|\Psi \rangle$ minimizing the energy \textit{folded} around the energy shift $\epsilon$, as in the modified eigenvalue Supp.~Eq.~(\ref{eq:fs}):
\begin{equation}
(\hat{\mathcal{H}} - \epsilon)^2 |\Psi \rangle =
(\mathcal{E} - \epsilon)^2 |\Psi \rangle
\label{eq:fs}
\end{equation}
which leads to the eigenstate closest in energy to the chosen $\epsilon$. Here we replace the usual conjugate gradient optimization method adopted in FS ~\cite{Tackett:2002} with a variational search for the state adopting the same particle swarm method in Algorithm \ref{pseudocode}, so to have a fair comparison ground with equivalent simulated runs of WAVES.\\
In this paragraph we report the results of applying FS to find excited states when applying the excitation operator $\hat{E}_{p_3}$ to the reference state of the H$_2$ systems, adopting a PH ansatz. We recall that $\hat{E}_{p_3}$ provides an initial guess state with good overlaps on both the 1$^{st}$ and 4$^{th}$ excited subspaces: $F_2 \simeq 0.39$ and $F_4 \simeq 0.61$ respectively. For a WAVES variational search, as shown in Supp. Fig. \ref{fig:summaryplot}, a majority of times the final state collapses on the 4$^{th}$ excited subspace, and in the remaining on the 1$^{st}$, always with high fidelity. This two-fold convergence averaged for 100 runs is reported in blue in Supp. Fig. \ref{fig:folded}.
\begin{figure}
\centering
\includegraphics[width=0.5\linewidth]{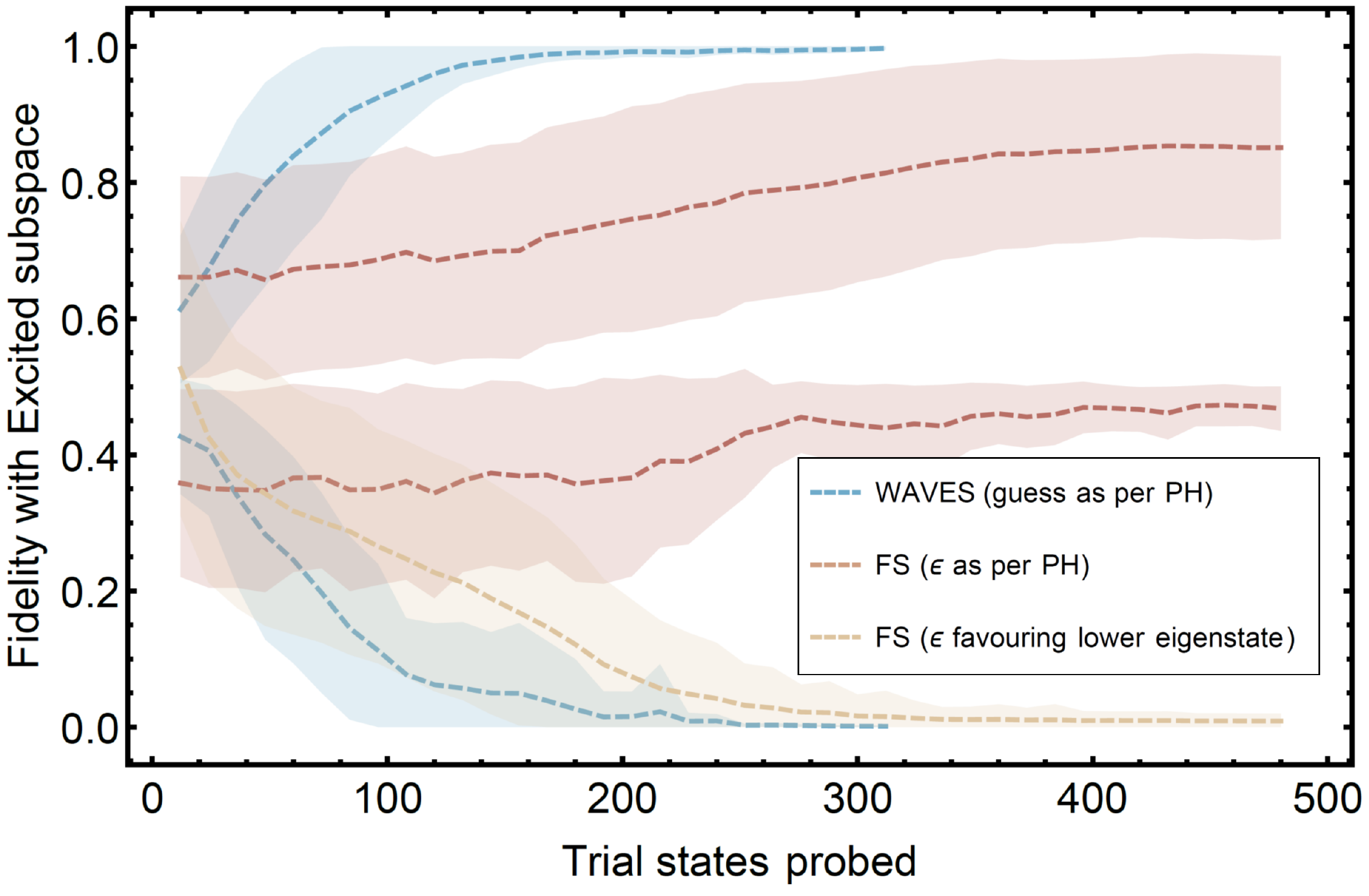}
\caption{
Behaviour comparison of the first part of WAVES, and an equivalent implementation of the Folded Spectrum method, when applied to the initial guess provided by the $\hat{E}_{p_3}$ excitation operator for the H$_2$ system. FS simulations were run adopting two different values of the energy shift $\epsilon$, colour-coded as in the legend. The fidelity reported is with the 4$^{th}$ excited subspace. We remark how all cases for both WAVES and FS implementation that converge to states with $F_4 \simeq 0$ have $F_1 > 0.99$ and can be considered successful searches. In all simulations, binomial noise was included (see SM \ref{sec:numsimH}).}
\label{fig:folded}
\end{figure}
In order to implement FS, a value for the energy shift $\epsilon$ is required. A natural choice is to employ the energy of the initial guess state provided by $\hat{E}_{p_3}$ applied to the PH parametrized ground-state: $|\Psi_3\rangle \equiv \hat{E}_{p_3} \hat{A}(\vec \theta_g) |\Phi\rangle $. Now, the energy of this trial state lies in between the two eigenvalues, corresponding to the excited subspaces with which we have consistent initial overlap. This makes them close in the folded spectrum and may become accidentally degenerate because of the initial uncertainty assigned to the set of parameters $\sigma_{\theta_g}$. If it is chosen $\epsilon \simeq \langle|\Psi_3 |\hat{\mathcal{H}} | \Psi_3\rangle$ in this way, the FS search in a minority of cases converges with high fidelity to the 4$^{th}$ excited subspace, but most of the time converges to local minima in the folded spectrum. These correspond to superpositions of eigenstates belonging to the different excited subspaces involved. In conclusion, the final excited-states estimates are poor, though still resembling the two-fold behaviour exhibited by WAVES (the results are shown in Supp. Fig. \ref{fig:folded}, in red). These limitations in the adoption of FS are known from the literature, including the original work~\cite{Wang_1994,Tackett:2002}.

In order to check the correct performance of the Folded Spectrum, we artificially reduced the value of $\epsilon$ to be slightly bigger than the eigenvalue of the 1$^{st}$ excited state. In this case, the FS search collapses consistently on the first excited subspace all the times, with a fidelity equivalent to that obtained with WAVES (see Supp. Fig. \ref{fig:folded}, in yellow). It is still possible to observe a slowdown in the convergence, requiring about twice the number of trial states with an equivalent choice of parameters. This poorer convergence rate is an expected effect, because of the squaring of the condition number, due in turn to the squaring of the relevant operator $(\mathcal{\hat{H}} - \epsilon)^2$. The resulting cost may be even worse in a quantum implementation, where the classical matrix multiplications to compute folded energies are replaced by dimension-efficient squaring in the terms of the Hamiltonian.  This tends to increase the cost of even a single energy evaluation by a polynomial amount on a quantum device.

This comparison highlights how a variational search implemented according to a FS method requires a careful choice in the energy shifts in order to produce reliable results. Obviously, whenever the energy spectrum is not known in advance, scanning different energies may be required to test the results obtained after a single run. For the same study case, WAVES variational search offers a solid alternative, successfully targeting excited eigenstates starting from poor initial guesses, with no a-priori knowledge of energy differences among the eigenstates.

\printbibliography

\end{document}